\def\E{\mathbb{E}}
\def\P{\mathbb{P}}
\def\ie{{\em i.e.}}
\def\R{\mathbb{R}}
\def\L{\mathcal{L}}
\def\v{\mathbf{v}}
\def\Q{\mathbf{Q}}
\def\d{\mathrm{d}}
\def\pzf{\mathtt{P}_{\mathtt{PZF}}}
\def\pmmse{\mathtt{P}_{\mathtt{MMSE}}}
\def\p{\mathtt{P}}
\def\I{\mathbf{I}}
\def\o{\hat{o}}
\def\l{\ell}
\newcommand{\hh}[1]{\ensuremath{\mathbf{h}_{#1}}}
\def\sinr{\mathtt{SINR}}
\def\sir{\mathtt{SIR}}
\newtheorem{lemma}{Lemma}{}
  \newtheorem{theorem}{Theorem}
  \newtheorem{prop}{Proposition}
\author{Sreejith~T.~Veetil, 
        Kiran~Kuchi  and~ Radha Krishna Ganti
\thanks{Sreejith T. V. and Kiran Kuchi are  with the Department
of Electrical Engineering, IIT Hyderabad, India. Radha Krishna Ganti is  with the Department
of Electrical Engineering, IIT Madras, India.}
}
\title{Performance of PZF and MMSE Receivers in Cellular Networks with Multi-User  Spatial Multiplexing}
\begin{document}
\maketitle  

\begin{abstract}
This paper characterizes the performance of cellular networks employing multiple antenna open-loop
 spatial multiplexing  techniques. We use a stochastic geometric framework to model distance depended inter cell interference. 
Using this framework, we analyze the coverage and rate using two linear
receivers, namely, partial zero-forcing (PZF) and minimum-mean-square-estimation (MMSE) receivers. Analytical expressions are obtained for coverage and rate distribution that are suitable for fast numerical computation. 

In the case of the PZF receiver,   we show that it is not optimal to utilize all the receive antenna for canceling interference. With $\alpha$ as the path loss exponent, $N_t$ transmit antenna, $N_r$ receive antenna, we show that it is optimal to use $N_t\left\lceil\left(1-\frac{2}{\alpha}\right) \left( \frac{N_r}{N_t} -\frac{1}{2}\right)\right\rceil$ receive antennas for interference cancellation and the remaining antennas for signal enhancement (array gain).  For both PZF and MMSE receivers, we observe that increasing the number of data streams  provides an improvement in the mean data rate with diminishing returns. Also transmitting $N_r$  streams is not optimal in terms of the mean sum rate.  
We observe that increasing the SM rate with a PZF receiver always degrades the cell edge data rate while the performance with MMSE receiver is nearly independent of the SM rate.
\end{abstract}
\begin{IEEEkeywords}
Cellular networks,  stochastic geometry, spatial multiplexing, partial zero forcing, minimum mean square error estimation.
\end{IEEEkeywords}
\section{Introduction}
 Multiple-input multiple-output  (MIMO) communication  is  an integral part of current cellular  standards. The  antennas can be used for increasing the number of data streams or improving the link reliability and the trade-off is well understood for a point-to-point link \cite{Telatar99capacityof, foschini1998limits,Tarokh98space-timecodes}.
Spatial multiplexing (SM)  is an important technique for boosting spectral efficiency of a point-to-point link with  multiple antenna, wherein  independent data streams are transmitted on different spatial dimensions.  The capacity improvement with SM in an isolated  link  in the presence of  additive Gaussian noise has been extensively studied \cite{1261332,1397926, gesbert}.

 However cellular systems are multi-user systems and co-channel interference is a  major impediment to the network performance.  Indeed, it has been argued with the help of simulations in \cite{CatreuxDriessenGreenstein,andrews2007overcoming}  that SM  is not very effective in  a multi-cell environment due to interference.   In a multi-user setup, in addition to providing diversity and multiplexing, the antennas can also  be used to serve different users and  {\em suppress} interference,  thereby adding  new dimensions for system design.

Existing cellular networks employ both closed-loop and open-loop SM methods.
Closed-loop SM requires channel state information (CSI) at the transmitter, and is suitable for users with
slowly varying channels (low Doppler case).  On the other hand open-loop SM is used  for
channels with high Doppler or in cases were there is inadequate feedback to support closed-loop SM.  Open-loop SM is also used for
increasing the performance of control channels where CSI feedback is not
available. Open-loop SM can be implemented in two ways. In single user SM, a
base station (BS) can allocate all the available data streams to a single user thereby increase the user's rate. Alternatively, the BS can serve multiple users at the
same time by allocating one stream per user. The latter approach is termed as
open-loop multi-user SM.

We consider the case where each BS has $N_t$ antenna  and  multiplexes
$N_t$ streams with one stream per user. The receiver  has $N_r$ receiver antennas.
In this case, with a linear receiver,  $N_t$ degrees-of-freedom (DOF) (among available $N_r$ DOF) can be used for suppressing
self-interference caused by SM while the remaining $N_r-N_t$ DOF can be used for
suppression of other cell interference or to obtain receiver array gain.
  For a typical cell edge user, a reduction of the SM rate at the transmitter might
result in enough residual DOF (after suppression of self-interference) to cancel the other cell interference and generally results in an
increased throughput. When the number of data streams transmitted from a BS is less than the number of antennas available, techniques like  cyclic-delay-diversity or open-loop dumb-beamforming \cite{Tse:2005:FWC} can be used  the remaining antennas. 

In this paper, we consider distance dependent inter-cell interference and investigate how multiple antenna can be used in the down link of an open-loop cellular system.    A general design goal is to maximize both mean and  cell edge data rates.  We analyze the various trade-offs between SM rate and the achievable mean/cell-edge data rates using linear receivers.


\section{Related Work}
Recent studies \cite{CatreuxDriessenGreenstein,andrews2007overcoming} show that spatial multiplexing MIMO systems, whose main benefit
is the supposed potential upswing in spectral efficiency, lose much of their
effectiveness in a multi-cell
environment with high interference.  Several approaches to handling interference in multi-cell MIMO systems are discussed in \cite{andrews2007overcoming}.  Blum in \cite{blum} investigated the capacity of an open-loop multi user MIMO system with
interference and have shown that the optimum power allocation across antenna depends on the interference power.  When the interference is high, it is optimal to allocate the entire power to one transmit antenna (single-stream) rather than spreading the power equally across antenna. 

There has been considerable work in ad hoc networks contrasting single stream transmission with multi-stream transmission using tools from stochastic geometry. It has been shown in   \cite{jindal,jindal-multi} that the network-wide throughput can be increased
linearly with the number of receive antennas, even if only a single transmit antenna is used by each node, and each node sends/receives only a single data stream. Interestingly, no channel state information (CSI) is required at the transmitter to achieve this gain.

Using  $(1-2/\alpha)$, where $\alpha$ is the path loss exponent, fraction of the receive degrees of
freedom for interference cancellation and the remaining degrees of freedom for array gain, allows for a  linear  scaling of the achievable rate with the  number of receiving antennas \cite{jindal-multi}. It is interesting to see that canceling
merely one interferer by each node may increase the transmission capacity even
when the CSI is imperfect \cite{HuangAndrews_SIC}. 
Importance of interference cancellation in
ad-hoc networks is also discussed in \cite{vaze,Andrews_Inte_Canc} and
\cite{DebbahHeath_intmit}.  However, most of these results are obtained by deriving bounds on the signal-to-interference ratio ($\sir$) distribution.

In \cite{louie2010spatial,Jindal_ICC} the exact distribution of $\sir$ with SM and minimum-mean-square-estimation (MMSE) receiver has been  obtained in an ad hoc network when the interferers are distributed as a spatial Poisson point process. The $\sir$ follows a quadratic form, and results from  \cite{gao1998theoretical, khatri} are used to obtain the distribution. Again, it was shown that single-stream transmission is preferable over
multi-stream transmission. In \cite{mckay}, distribution of $\sir$  for multiple antenna system with various receivers and  transmission schemes are obtained  for a Poisson interference field. In \cite{DebbahHeath_intmit} scaling laws for the
transmission capacity with zero-forcing beamforming were obtained,  and it was shown that for a large number of antennas, the maximum
density of concurrently transmitting nodes scales linearly with the number of
antennas at the transmitter, for a given outage constraint. In \cite{6655532}, the distribution of $\sir$ in a zero-forcing receiver  with co-channel interference is obtained.

In ad hoc networks, an interferer can be arbitrarily close (much closer than the intended transmitter) to the receiver
in consideration. This results in interference that is heavy-tailed. On the other hand, in a cellular network the user usually
connects to the closest BS and hence the distance to the nearest interferer is greater than the distance to the serving BS. This
leads to a more tamed interference distribution compared to an ad hoc networks.  

\subsection{Main Contributions}
In this paper we focus on  linear receivers, namely the 
 the  partial zero-forcing receiver and the  MMSE receiver. The MMSE receiver optimally balances signal
boosting and interference cancellation and maximizes the SINR. The 
sub-optimal partial zero-forcing receiver uses a specified number of
degrees of freedom for signal boosting and the remainder for interference cancellation. 
\begin{itemize}
\item We provide the distribution of $\sinr$ with a partial zero-forcing
receiver. This analysis  also includes the inter-cell interference which is
usually neglected.
  The resultant expression can be computed by evaluating a
single integral. 

\item  With one stream per-user, we obtain the optimal configuration
of receive antennas. In particular  we show
that it is optimal to use 
$N_t\left\lceil\left(1-\frac{2}{\alpha}\right) \left( \frac{N_r}{N_t} -\frac{1}{2}\right)\right\rceil$ receive antennas for interference
cancellation and the remaining antennas  for signal enhancement (array-gain).
\item  We compute the cumulative distribution function of the $\sinr$ with a linear
MMSE receiver. In the interference-limited case,  the distribution can be
computed without any integration. 
\item The sum rate expressions are provided for both PZF and MMSE receivers. Numerical evaluation of these results show that average sum rate increases with the number of data streams with diminishing returns.  The mean sum rate reaches a maximum value
for a certain optimum  number of data stream that is generally less than the number of receive antenna $N_r$. On the other hand, increasing the number of data streams always degrades the cell edge data rate for PZF receiver and MMSE receivers. However, the impact is less severe with a MMSE receiver. 
\end{itemize}
\subsection{Organization of the paper}
In Section \ref{sec:sysmodel}, the system model, particularly the BS location
model, is described in detail. In Section \ref{sec:pzf}, the $\sinr$ distribution with  a partial-zero-forcing receiver is derived. In Section \ref{sec:mmse}, the $\sinr$ distribution is obtained for a
linear MMSE receiver and in Section \ref{sec:rate}, the average
ergodic rate is analyzed with  both PZF and MMSE receivers.  The paper is concluded in Section \ref{conclusions}.

\section{System Model}
\label{sec:sysmodel}
We  now provide  a mathematical model of the cellular system that will be used in the subsequent  analysis. We begin  with the spatial distribution of the base stations. 

\noindent{\em Network Model}: The locations of  the  base stations (BSs) are modeled by a spatial Poisson point process (PPP)  \cite{stoyan} $\Phi \subset \R^2$ of density $\lambda$.  The  PPP model for BS spatial location provides a good approximation for irregular BS deployments.  The merits and demerits of this model for BS locations have been extensively discussed in \cite{ganti_coverage}. 

We assume the nearest BS connectivity model, \ie, a user  connects to the nearest BS. This nearest BS connectivity model  results in a Voronoi tessellation of the plane with respect to the BS locations.  See Figure \ref{fig:vor}. Hence  the service area of a BS is the Voronoi cell associated with it. 

We assume that  each BS  is equipped with $N_t$ antenna (active transmitting antennas) and a user (UE) is equipped with $N_r$ antenna.  In this paper we focus on downlink and hence the $N_t$ antenna at the BSs are used for transmission and the $N_r$ antenna at the UE are used for reception.  We assume that all the BSs transmit with equal power which for convenience we set to unity. Hence each transmit antenna uses a power of $1/N_t$.   

\noindent{\em Channel  and path loss model}:  We assume independent Rayleigh fading  with unit mean between any pair of antenna.  We focus on the downlink performance and hence  without loss of generality, we consider and analyse the  performance of  a typical mobile user located at the origin.   The $N_r \times 1$ fading vector between the $q$-th antenna of the BS $x \in \Phi$  and the typical mobile at the origin is denoted by $ \hh{x, q}$. We assume $\hh{x, q} \sim \mathcal{CN}(\mathbf{0}_{N_r \times 1},\mathbf{I}_{N_r})$, \ie, a circularly-symmetric complex Gaussian random vector.   The standard  path loss model $\l(x)=\|x\|^{-\alpha}$,  with path loss exponent $\alpha>2$ is assumed. Specifically, the link between $q$-th transmit antenna of the BS at $x$ and the $N_r$ receiver antennas of the user at origin  is $\sqrt{\|x\|^{-\alpha}}\hh{x,q}$.

\noindent{\em Received signal and interference}:
 We consider the case where each BS uses its $N_t$ antennas to serve  $N_t$  {\em independent}  data  streams to $N_t$ users in its cell\footnote{We make the assumption that every cell has at least $N_t$ users. This is true with high probability when there are large number of users which is normally the case. }.   Let $\o \in \Phi$ denote the BS that is closest to the mobile user at the origin.  We assume that the UE at the origin is interested in decoding the $k$-th stream transmitted by its associated  BS $\hat{o}$. Focusing on the $k$-{th} stream transmitted by $\o$, the received $N_r \times 1$ signal vector  at the typical mobile user is  
 \begin{align}
\mathbf{y}=&   \frac{a_{\o,k}}{\sqrt{r^{\alpha}}}\hh{\o,k} +\frac{1}{\sqrt{r^\alpha}}\sum_{q=1,q\ne k}^{N_t}\hh{\o,q} a_{\o,q}+\I(\Phi) +\mathbf{w},
\label{eq:main1}
 \end{align}
where  $\I(\Phi)= \sum_{x\in \Phi\setminus \o} \frac{1}{\sqrt{\|x\|^\alpha}} \sum_{q=1}^{N_t} \hh{x,q} a_{x,q}$,
denotes the intercell interference from other BSs.   The  symbol transmitted from the the $q$-th  antenna of the base station  $x \in \Phi$  is denoted by $a_{x,q}$ and  $\E[|a_{x,q}|^2]=1/N_t$.  The additive white Gaussian noise is given by $\textbf{w} \sim \mathcal{CN} (\mathbf{0}_{N_r \times 1},\sigma^2 \mathbf{I}_{N_r}) $.  The distance between the typical mobile user at the origin and its associated (closest) BS is denoted by $r = \|\o\|$.   Observe that $r$ is a random variable since the BS locations  are random.
We now present  few auxiliary results on the distribution of some spatial random variables that will be used later in the paper. 
\begin{figure}
\begin{center}
\includegraphics[width=7cm]{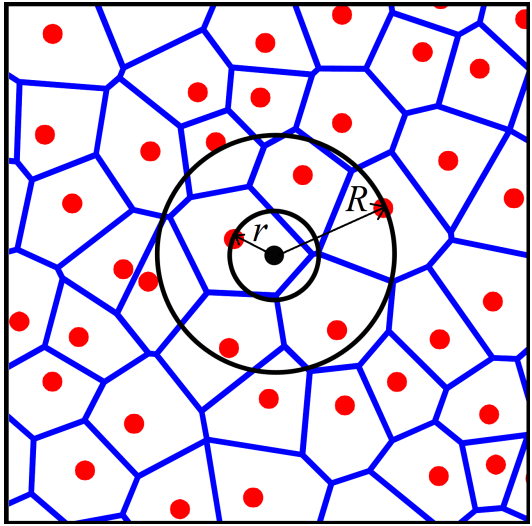}
\end{center}
\caption{Illustration of the BS locations modelled by a PPP and their corresponding cells. As an example,  the distance to the fifth nearest BS  from the typical user at the origin is denoted by $R$ and the distance to the serving BS is denoted by $r$.}
\label{fig:vor}
\end{figure}

\noindent{\em Distance to the serving BS and $(m-1)$-th interfering  BS}:
 We now  obtain the  joint distribution of the distance of the  origin to the nearest BS and the distance to the  $m-1$ interfering BS. 
Recall that $r$ denotes the distance to the serving (nearest) BS. The   PDF of the distance to the nearest neighbor is \cite{stoyan}
\begin{equation}
f_r(r)= e^{-\lambda \pi r^2}2\pi\lambda r.
\label{eqn:fr}
\end{equation}
 We now compute the distance to the $(m-1)$-th closest interfering  BS conditioned on the distance to the nearest BS $r$. 
Let $R$ denote the distance to the $(m-1)$-th BS. See Figure \ref{fig:vor}. Hence the event $R \leq R_0$ equals the event that there are at least $m-1$ base stations in the region between two concentric circles of radius $r$ and $R$ centred at origin. Hence
 \begin{align}\label{eqn:fRr}
F_{R|r}(R_0\ |\ r_0)=&\:\mathbb{P}\left[R\le R_0|r=r_0\right]=\sum_{k=m-1}^{\infty}e^{-\pi\lambda(R_0^2-r_0^2)}\displaystyle\frac{[\lambda\pi(R_0^2-r_0^2)]^k}{k!}, \ R_0>r_0.
\end{align} 
 Hence the conditional PDF is
\begin{align}
f_{R|r}(R|r)
=&\frac{2\pi\lambda R}{(m-2)!} e^{-\pi\lambda(R^2-r^2)}
\left(\pi\lambda(R^2-r^2)\right)^{m-2} , \ R>r.
\label{eqn:fRr2}
\end{align}
Let $\beta = R/r$ denote the ratio of the distance of the $m-1$ th closest interfering  BS of the typical user to the distance of its closest BS.  Using \eqref{eqn:fRr2} and \eqref{eqn:fr}, it can be easily shown that the PDF of the random variable $\beta$ is
\begin{align}
g_\beta(\beta) =2(m-1)\beta^{1-2m}(\beta^2-1)^{m-2}, \ \beta>1.
\label{eq:gb} 
\end{align}
Observe that the ration $\beta$ does not depend on the density of the PPP. The average value of $\beta$ is given by $\E[\beta]=\frac{\sqrt{\pi} \Gamma(m)}{\Gamma(m-1/2)}\approx \sqrt{(m-1)\pi}$,
and $\E[\beta^\nu] =\infty, \ \nu\geq 2$ irrespective of $m \geq2$.
 We begin with the analysis of a partial  zero forcing receiver. 

\section{Partial zero forcing (PZF) receiver}
\label{sec:pzf}
 In this section, we will analyze the distribution of the post-processing $\sinr$ with a PZF receiver in a cellular setting.  We assume that the user  has perfect knowledge of the interfering node channels that it wishes to cancel. 
\subsection{Coverage probability }
 Each user  has $N_r$ antenna, which can be represented as $N_r = m N_t+\delta$,  $m\geq 1$   receive antenna.  
%
The  receive filter $\v$ for the typical user at the origin is chosen orthogonal to the channel vectors of the interferers  and the streams that need to be canceled. 
Without loss of generality, we assume that the typical UE at the origin is  interested in the  $k$-th stream \eqref{eq:main1}. 
The receive filter  $\v$ is chosen as a unit norm vector orthogonal to the following vectors:
\begin{align*} 
\hh{\o,q}&:q=1, 2, .. ,k-1, k+1,.., N_t,\\
\hh{x,q}&:x\in \{x_1,x_2,...,x_{m-1}\}, \ q=1, 2, .., N_t,
\end{align*}
where $ \{x_1,x_2,...,x_{m-1}\}$ are the $(m-1)$ BSs closest to the typical UE  in consideration excluding $\o$. 
The dimension of the span of the above mentioned vectors is $N_t-1+(m-1)N_t = m
N_t-1$  with high probability.  Amongst the filters orthogonal to those vectors,
we are interested in the one that maximizes the signal power $|\v^\dag \hh{\o,k}
|^2$ . This corresponds to choosing $\v$ in the direction of the projection of
vector $\hh{\o,k}$ on the nullspace of the interfering channel vectors. The
dimension of the corresponding nullspace is $N_r-mN_t + 1$. If the columns of
an $(N_r-mN_t + 1) \times N_r$ matrix $\Q$ form an orthonormal basis for
this nullspace, then the receive filter $\v$ is chosen as:
\[
 \v=\Q\frac{\Q^\dag \hh{\o,k}}{\|\Q^\dag \hh{\o,k}\|}.
\]
So if $N_r = m N_t+\delta$, the remaining
$\delta+1$ degrees of freedom can be used to boost the signal power.  Hence at
the receiver, 
 \begin{align*}
\v^\dag\mathbf{y}_{k}=&  \frac{a_{\o,k}}{\sqrt{r^{\alpha}}} \v^\dag\hh{\o,k} +\sum_{q=1,q\ne k}^{N_t} \frac{a_{\o,q}}{\sqrt{r^\alpha}}\v^\dag\hh{\o,q}+\v^\dag\I(\Phi)+\v^\dag\mathbf{w}.
 \end{align*}
Since $\v^\dag$ is designed to null the closest $m-1$ interferers,  $\v^\dag\I(\Phi)=\v^\dag\I(\hat{\Phi}) $ where $\hat{\Phi} = \Phi\setminus \{x_1,...,x_{m-1}\}$. So we have 
$\tilde{y}_{k}=  \frac{a_{\o,k}}{\sqrt{r^{\alpha}}} \v^\dag\hh{\o,k} +\v^\dag\I(\hat{\Phi})+ \v^\dag\mathbf{w}$.
Let  $S \overset{\triangle}{=} |\v^\dag\hh{\o,k}|^2$ and $H_{x,q}\overset{\triangle}{=} |\v^\dag\hh{x,q}|^2$. 
The post processing zero-forcing signal-to-interference-noise ratio ($\sinr$)  \cite{jindal} of the $k$-th stream  is 
\begin{align}
\sinr=&\frac{Sr^{-\alpha}} {N_t \sigma^2 + \underbrace{\sum_{x\in \hat\Phi}\|x\|^{-\alpha} \sum_{q=1}^{N_t} H_{x,q}}_{\hat\I(\hat\Phi)}}.
\label{eqn:SINR}
\end{align}
 Also  $S \sim \chi^2_{2(N_r-m N_t+1)}$, \ie, a chi-squared random variable with $2(N_r-m N_t+1)$ degrees of freedom and $H_{x,q}$ are  i.i.d. exponential random variables.  When $N_r=mN_t$ the receiver can only cancel interference from $(m-1)$ nearest BSs and in this case, $S$   is an exponential random variable.
 
  A mobile user is said to be in coverage if the received  $\sinr$ (after pre-processing) is greater than the threshold $z$, needed to establish the connection. The probability of coverage is  defined as 
\begin{equation}\label{eqn:CovDef}
\pzf(z)\overset{\triangle}{=}\P[\sinr>z].
\end{equation} 
Observe that the coverage is essentially the complementary cumulative distribution function (CCDF) of the $\sinr$. Since  $\pzf(z)$, quantifies the entire distribution, it can be used to compute other metrics of interest like average ergodic rate. 
We first provide the main result which deals with the coverage probability with noise. We begin with the  evaluation of  the Laplace transform of interference conditioned on  the distances $R$ and $r$. 

\begin{lemma}
The Laplace transform of the residual interference in PZF conditional on $R$ and $r$ is given by 
\[\L_{\I}(s)=\exp\left(-\lambda \pi   R^2 \left(\, _2F_1\left(N_t,-\frac{2}{\alpha
   },\frac{\alpha -2}{\alpha },-R^{-\alpha } s\right)-1\right)\right),\]
where ${}_2F_1(a,b,c,z)$ is the standard hypergeometric function\footnote{${}_2F_1(a,b,c,z)=\frac{\Gamma(c)}{\Gamma(b)\Gamma(c-b)}\int_0^1\frac{t^{b-1}(1-t)^{c-b-1}}{(1-tz)^a}\d t$, and $\Gamma(x)$ is the standard Gamma function. }. 
\label{lem:Lap1}
\end{lemma}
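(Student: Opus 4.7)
The plan is to reduce $\mathcal{L}_\mathbf{I}(s)$ to a Laplace functional over a Poisson point process and then identify the resulting radial integral with ${}_2F_1$. First, observe that once we condition on the distance $R$ to the $(m-1)$-th nearest interferer (and on $r$), the remaining BSs $\hat\Phi = \Phi\setminus\{x_1,\dots,x_{m-1}\}$ form, by the complete independence of the PPP, a Poisson process of intensity $\lambda$ on $\{x\in\mathbb{R}^2:\|x\|>R\}$. The mark at each interferer is $G_x \triangleq \sum_{q=1}^{N_t} H_{x,q}$. Because $\mathbf{v}$ is deterministic in the relevant fading vectors (it is built from $\hh{\o,q}$ for $q\neq k$ and from $\hh{x_j,q}$ for the cancelled interferers), and because the $\hh{x,q}$ for $x\in\hat\Phi$ are independent of $\mathbf{v}$ with $\mathcal{CN}(\mathbf{0},\mathbf{I})$ distribution, $|\mathbf{v}^\dagger \hh{x,q}|^2$ is $\mathrm{Exp}(1)$. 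Hence $G_x \sim \mathrm{Gamma}(N_t,1)$ independently across $x$, and its Laplace transform evaluated at $s\|x\|^{-\alpha}$ equals $(1+s\|x\|^{-\alpha})^{-N_t}$.

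Next I will apply the PGFL of the PPP restricted to $\|x\|>R$ and pass to polar coordinates to obtain
\[
\mathcal{L}_\mathbf{I}(s)=\exp\!\left(-2\pi\lambda \int_R^\infty \!\Bigl(1-(1+s t^{-\alpha})^{-N_t}\Bigr)\,t\,\mathrm{d}t\right).
\]
Applying the change of variables $v=(R/t)^\alpha$ (so that $t\,\mathrm{d}t = -\tfrac{R^2}{\alpha}v^{-2/\alpha-1}\mathrm{d}v$ and $v\in(0,1]$) turns the exponent into
\[
-\frac{2\pi\lambda R^2}{\alpha}\int_0^1 \Bigl(1-(1+sR^{-\alpha}v)^{-N_t}\Bigr)\,v^{-2/\alpha-1}\,\mathrm{d}v.
\]

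The final task is to recognize this integral as $\tfrac{\alpha}{2}\bigl({}_2F_1(N_t,-\tfrac{2}{\alpha},\tfrac{\alpha-2}{\alpha},-sR^{-\alpha})-1\bigr)$. The cleanest route is to expand $(1+sR^{-\alpha}v)^{-N_t} = \sum_{k\ge 0}\tfrac{(N_t)_k}{k!}(-sR^{-\alpha}v)^k$, subtract $1$, integrate termwise against $v^{-2/\alpha-1}$ (each $k\ge 1$ term is integrable since $\alpha>2$), and then match the resulting series with the hypergeometric series using the Pochhammer identity $(-2/\alpha)_k/(1-2/\alpha)_k=-\tfrac{2/\alpha}{k-2/\alpha}$ for $k\ge 1$. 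Collecting the prefactor $-\lambda\pi R^2$ out front yields the claimed expression.

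The main obstacle will be the last step: the standard Euler integral representation of ${}_2F_1$ requires the second parameter to be positive, whereas here $b=-2/\alpha<0$, so the natural "${}_2F_1$ via $\int_0^1 v^{b-1}(1+zv)^{-a}\mathrm{d}v$" identity fails to converge termwise at the endpoint $v=0$. Working instead with the integrand $1-(1+zv)^{-N_t}$, which vanishes linearly at $v=0$, restores integrability and lets the termwise computation go through; this is the regularization that makes the identification with ${}_2F_1$ (via its analytically continued series) rigorous.
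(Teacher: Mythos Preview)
Your proposal is correct and follows essentially the same route as the paper: recognize that $\sum_q H_{x,q}$ is $\mathrm{Gamma}(N_t,1)$, apply the PGFL of the PPP restricted to $\{\|x\|>R\}$, and identify the resulting radial integral with the hypergeometric function. The paper simply states the last identification without justification, whereas you supply the substitution $v=(R/t)^\alpha$, the termwise integration, and the Pochhammer identity $(-2/\alpha)_k/(1-2/\alpha)_k=(-2/\alpha)/(k-2/\alpha)$; your remark that the ``$1-$'' in the integrand is what makes the $v^{-2/\alpha-1}$ integral converge at $v=0$ (so that the Euler integral with $b=-2/\alpha<0$ need not be invoked) is a nice point the paper leaves implicit.
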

\begin{proof}
The PZF receiver is designed such that it can cancel interference from nearest $(m-1)$ BSs. Conditioned on the distance to $(m-1)$-{th} BS  $R$, 
\begin{align*} 
\L_{\I}(s)=&\mathbb{E}\left[e^{-s\hat\I(\hat\Phi)} \right]  =\mathbb{E} \exp\left(-s\sum_{x \in \hat\Phi } \|x\|^{-\alpha} \sum_{q=1}^{N_t}H_{x,q}\right). \nonumber
\end{align*}
Since $H_{x,q}$ are i.i.d exponential, their sum  $\sum_{q=1}^{N_t}H_{x,q}$ is  Gamma distributed.  Using the Laplace transform of the Gamma distribution,
\begin{align}
\L_{\I}(s)=&\E\prod_{x\in\hat\Phi}\E \exp\left(-s\|x\|^{-\alpha}\sum_{q=1}^{N_t}H_{x,q}\right) =\E\prod_{x\in \hat\Phi }\frac{1}{(1+s\|x\|^{-\alpha})^{N_t}}\nonumber,\\
\stackrel{(a)}{=}&\exp\left(-\lambda 2\pi\int_R^\infty\left( 1- \frac{1}{(1+sx^{-\alpha})^{N_t}}\right) x\d x\right),\nonumber\\
=&\exp\left(-\lambda \pi   R^2 \left(\, _2F_1\left(N_t,-\frac{2}{\alpha
   },\frac{\alpha -2}{\alpha },-R^{-\alpha } s\right)-1\right)\right).
\label{eq:laplace}
\end{align}
where $(a)$ follows from the probability generating functional (PGFL)  of the PPP \cite{stoyan}. 
\end{proof}
The Laplace transform in Lemma \ref{lem:Lap1} is used next to compute the coverage probability. 
\begin{theorem}
The probability of coverage  with PZF receiver having $N_r=m N_t+\delta$ antennas is given by
\[\pzf(z)= \int_{0}^\infty \int_{r}^\infty \sum_{k=0}^{\delta}  \frac{(-s)^k}{k!} \frac{\d^k}{\d s^k} \L_{\I}(s)e^{-sN_t\sigma^2}\Big|_{s=z r^\alpha}f_{R|r}(R|r) f_r(r) \d R   \d r,\]
where $\L_{\I}(s)$ is given in Lemma \ref{lem:Lap1} and $f_{R|r}(R|r)$  in \eqref{eqn:fRr2}.
\label{the:cov_gen}
\end{theorem}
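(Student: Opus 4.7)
The plan is to peel off the randomness in \eqref{eqn:SINR} one layer at a time, starting with the signal term $S$. Because $\v$ is constructed from channels independent of $\hh{\o,k}$, and its direction is the normalized projection of $\hh{\o,k}$ onto a $(\delta+1)$-dimensional subspace, $S=|\v^\dagger\hh{\o,k}|^2$ equals $\|\Q^\dagger\hh{\o,k}\|^2$ in distribution, i.e., a sum of $\delta+1$ i.i.d.\ unit-mean exponentials. In particular $S$ has the clean complementary CDF
\[
\P[S>x]=e^{-x}\sum_{k=0}^{\delta}\frac{x^{k}}{k!}.
\]
I will condition on $r$, on the distance $R$ to the $(m-1)$-th nearest BS (which fixes the integration region defining $\hat\Phi$), and on $\hat\I(\hat\Phi)$, and apply this CCDF to the event $\{S r^{-\alpha}>z(N_t\sigma^2+\hat\I)\}$.

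Writing $s=z r^{\alpha}$ and $T=N_t\sigma^2+\hat\I$, the previous step gives
\[
\P[\sinr>z\mid r,R,\hat\I]=e^{-sT}\sum_{k=0}^{\delta}\frac{(sT)^{k}}{k!}.
\]
Next I would take the expectation in $\hat\I$ while holding $r,R$ fixed. The key trick is the identity $T^{k}e^{-sT}=(-1)^{k}\,\partial_{s}^{k}e^{-sT}$, which, after a standard dominated-convergence argument to justify moving the derivative past the expectation, produces
\[
\E\bigl[T^{k}e^{-sT}\mid r,R\bigr]=(-1)^{k}\frac{\d^{k}}{\d s^{k}}\Bigl(e^{-sN_t\sigma^{2}}\L_{\I}(s)\Bigr),
\]
where $\L_{\I}(s)$ is precisely the conditional Laplace transform supplied by Lemma \ref{lem:Lap1}. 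Substituting back and evaluating at $s=zr^{\alpha}$ yields the conditional coverage probability as the displayed finite sum. The final step is to deconditional over $R$ and $r$ using \eqref{eqn:fRr2} and \eqref{eqn:fr}, producing the double integral in the statement.

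The only delicate step is the differentiation-under-the-expectation used to obtain the $(-1)^k\,\d^k/\d s^k$ formula; here I would argue that $s\mapsto e^{-sT}$ is smooth in $s$ for $s\ge 0$, that $|T^k e^{-sT}|$ is integrable against the law of $\hat\I$ for $s$ in a neighborhood of $zr^\alpha>0$ (this uses $T\ge N_t\sigma^2$ and the integrability provided by the PPP thinning outside $R$), and invoke dominated convergence. Everything else --- the Gamma form of $S$, the conditional CCDF expansion, and the substitution $s=zr^\alpha$ --- is routine, and Lemma \ref{lem:Lap1} directly supplies the object whose derivatives must be taken.
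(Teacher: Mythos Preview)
Your proposal is correct and follows essentially the same route as the paper: condition on $(r,R)$, use the CCDF of $S\sim\chi^2_{2(\delta+1)}$ to write $\P[\sinr>z\mid r,R]$ as $\sum_{k=0}^{\delta}\frac{s^{k}}{k!}\E[(N_t\sigma^2+\hat\I)^{k}e^{-s(N_t\sigma^2+\hat\I)}]$, rewrite this via the identity $T^{k}e^{-sT}=(-1)^{k}\partial_s^{k}e^{-sT}$ as derivatives of the Laplace transform $\L_{\I}(s)e^{-sN_t\sigma^2}$ supplied by Lemma~\ref{lem:Lap1}, and then average over $(R,r)$. The only difference is that you spell out a dominated-convergence justification for swapping $\partial_s^k$ with the expectation, which the paper simply invokes as ``the differentiation property of the Laplace transform.''
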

 \begin{proof}
Conditioned on the random variables $R$ and $r$,  we have
\begin{align} \p_c(z,\alpha\ |\ r)=& \mathbb{E}_{\I}[\mathbb{P}(S>z\, r^\alpha (\I +N_t\sigma^2) )]\nonumber,\\
\overset{(a)}{=}&\sum_{k=0}^{\delta} \frac{({z\, r^\alpha})^k }{k!} \mathbb{E}_{\I}[(\I +N_t\sigma^2)^k e^{-z r^\alpha (\I +N_t\sigma^2)}]\nonumber,\\
\overset{(b)}{=}&\sum_{k=0}^{\delta} \left( \frac{(-s)^k}{k!} \frac{\d^k}{\d s^k} \L_{(\I +N_t\sigma^2)}(s)\right)_{s=z r^\alpha}.
\label{eq:condprob}
\end{align}
$(a)$ follows from the CCDF of   $\chi_{2(N_r-m N_t+1)}^2$ and $(b)$ by the differentiation property of the Laplace transform.
$\L_{(\I +N_t\sigma^2)}(s)$ is the  Laplace transform of interference  and noise  and equals
$\L_{\I}(s)\exp(-sN_t\sigma^2)$ where $\L_{\I}(s)$ is given in Lemma \ref{lem:Lap1}. The result follows by averaging over $R$ and $r$.
\end{proof}
\subsection{Interference limited networks, $\sigma^2=0$}
\label{setpart}
We now specialize the coverage  expression in  Theorem \ref{the:cov_gen} when $\sigma^2=0$, \ie,  an interference-limited network. In Theorem \ref{the:cov_gen},  the coverage probability expression requires evaluating the $k$-th derivative of a composite function. The derivatives of a composite function can  be written in a succinct form by using a set partition version of Fa\`{a} di Bruno's formula. We now introduce some  notation that will be used in the next theorem.

 A partition of a set $S$ is a collection of disjoint subsets of $S$ whose union is $S$. The collection of all the set  partitions of the integer set $[1, 2, 3, \hdots k]$ is denoted by $\mathcal{S}_k$ and its cardinality is called the $k$-th Bell number. For a partition $\upsilon\in \mathcal{S}_k$, let  $|\upsilon|$ denote  the  number of blocks in the partition and $|\upsilon|_j$ denote  the number of blocks with exactly $j$ elements. For example, when $k=3$, there are $5$  partitions,  \[\mathcal{S}_3=\Big\{\{1,2,3\},\big\{\{1\},\{2,3\}\big\},\big\{\{1,2\},\{3\}\big\},\big\{\{1,3\},\{2\}\big\},\big\{\{ 1\},\{2\},\{3\}\big\}\Big\}.\] For the partition $ \upsilon=\big\{\{ 1\},\{2\},\{3\}\big\}$, $|\upsilon|=3$, $|\upsilon|_1=3$ and $|\upsilon|_2=0$.  Also, define \begin{equation*}
\Lambda_{\varsigma, N_t}(z) \triangleq
 \,_2F_1\left(N_t+\varsigma,\varsigma-\frac{2}{\alpha
},\varsigma-\frac{2}{\alpha
   }+1,-z \right),
\end{equation*}
where $\,_2F_1(a,b,c,x)$ is the standard hypergeometric function. 
\begin{theorem}
When the network is interference limited, \ie, $\sigma^2=0$, the probability of coverage  with a PZF receiver having $N_r=m N_t+\delta$ antennas is 
 \begin{align}
\pzf(z)=& \sum_{k=0}^{\delta} \frac{z ^k}{k!} \sum_{\upsilon\in \mathcal{S}_k}(-1)^{ |\upsilon|} (m)_{|\upsilon|} \E_\beta\left[ {  \frac{\beta^{-\alpha k } }{\Lambda_{0, N_t}(\beta^{-\alpha} z)^{m}}  \prod_{j=1}^k \left(\frac{(N_t)_j (-\frac{2}{\alpha
   })_j}{(\frac{\alpha -2}{\alpha })_j}\frac{\Lambda_{j, N_t}(\beta^{-\alpha} z)}{\Lambda_{0, N_t}(\beta^{-\alpha} z)}\right)^{|\upsilon|_j}}   \right],
   \label{eq:456}
\end{align} where, $(x)_n=\frac{\Gamma(x+n)}{\Gamma(n)}$ is the Pochhammer function., 
The expectation is with respect to the variable $\beta =R/r$  whose PDF is given by $g(\beta)$ provided in \eqref{eq:gb}

\label{the:cov}
\end{theorem}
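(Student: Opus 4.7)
The plan is to specialize Theorem~\ref{the:cov_gen} to $\sigma^2=0$, evaluate the $k$-th derivative of $\L_{\I}(s)$ in closed form via the set-partition form of Fa\`a di Bruno's formula, and then collapse the double integral over $(R,r)$ into a single expectation over the ratio $\beta=R/r$, whose density $g_\beta$ is given in \eqref{eq:gb}.

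First I would write $\L_{\I}(s)=\exp\bigl(-\lambda\pi R^2\,\Psi(s)\bigr)$ with $\Psi(s)={}_2F_1(N_t,-2/\alpha,(\alpha-2)/\alpha,-R^{-\alpha}s)-1$. Fa\`a di Bruno then yields
\[\frac{\d^k}{\d s^k}\L_{\I}(s)=\L_{\I}(s)\sum_{\upsilon\in\mathcal{S}_k}\prod_{B\in\upsilon}\bigl(-\lambda\pi R^2\bigr)\Psi^{(|B|)}(s).\]
Each block derivative $\Psi^{(j)}$ is obtained from the standard hypergeometric identity $\frac{\d^j}{\d z^j}{}_2F_1(a,b,c,z)=\frac{(a)_j(b)_j}{(c)_j}{}_2F_1(a+j,b+j,c+j,z)$ together with a chain-rule factor $(-R^{-\alpha})^j$; evaluating at $s=zr^\alpha$ turns every hypergeometric argument into $-\beta^{-\alpha}z$, which is precisely the argument of $\Lambda_{|B|,N_t}$. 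The prefactor $(-s)^k/k!$ in Theorem~\ref{the:cov_gen} cancels $\prod_B(-R^{-\alpha})^{|B|}=(-R^{-\alpha})^k$, leaving $z^k\beta^{-\alpha k}/k!$. Collecting the sign and Pochhammer contributions of each block and grouping blocks of equal size $j$ produces the partition weight $(-1)^{|\upsilon|}(\lambda\pi R^2)^{|\upsilon|}\prod_{j=1}^k\bigl(\frac{(N_t)_j(-2/\alpha)_j}{((\alpha-2)/\alpha)_j}\Lambda_{j,N_t}(\beta^{-\alpha}z)\bigr)^{|\upsilon|_j}$.

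For the spatial integration I would substitute $u=\pi\lambda r^2$ and $\beta=R/r$, under which $f_{R|r}f_r\,\d R\,\d r$ becomes $\frac{2\beta u^{m-1}(\beta^2-1)^{m-2}}{(m-2)!}\,e^{-\beta^2 u}\,\d u\,\d\beta$; the three exponentials from $\L_{\I}(zr^\alpha)$, $e^{-\pi\lambda r^2}$, and $e^{-\pi\lambda(R^2-r^2)}$ then coalesce into the single exponential $e^{-\beta^2 u\,\Lambda_{0,N_t}(\beta^{-\alpha}z)}$. The extra factor $(\lambda\pi R^2)^{|\upsilon|}=(\beta^2 u)^{|\upsilon|}$ promotes the $u$-integrand to $u^{m-1+|\upsilon|}e^{-\beta^2 u\,\Lambda_{0,N_t}}$, which integrates to $\Gamma(m+|\upsilon|)/[\beta^2\Lambda_{0,N_t}]^{m+|\upsilon|}$. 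The remaining $\beta$-measure telescopes via $2\Gamma(m+|\upsilon|)\beta^{1-2m}(\beta^2-1)^{m-2}/(m-2)!=(m)_{|\upsilon|}\,g_\beta(\beta)$, and the rewriting $\Lambda_{0,N_t}^{-(m+|\upsilon|)}\prod_j\Lambda_{j,N_t}^{|\upsilon|_j}=\Lambda_{0,N_t}^{-m}\prod_j(\Lambda_{j,N_t}/\Lambda_{0,N_t})^{|\upsilon|_j}$, using $\sum_j|\upsilon|_j=|\upsilon|$, recasts the integrand in the form of \eqref{eq:456}.

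The main obstacle is purely one of bookkeeping: keeping accurate track of the signs, powers of $\beta$, and Pochhammer factors through Fa\`a di Bruno, the hypergeometric chain rule, and the $(R,r)\mapsto(u,\beta)$ change of variables, so that the $u$-integral closes exactly as a Gamma function and the density $g_\beta$ emerges cleanly from what remains.
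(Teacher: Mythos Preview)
Your proposal is correct and follows essentially the same route as the paper: specialize Theorem~\ref{the:cov_gen} to $\sigma^2=0$, differentiate $\L_{\I}$ via the set-partition Fa\`a di Bruno formula together with the hypergeometric derivative identity, and then change variables $(R,r)\mapsto(\beta,t)$ so that the radial integral closes as a Gamma function and the density $g_\beta$ emerges. The only cosmetic difference is your use of $u=\pi\lambda r^2$ in place of the paper's $t=r$, which makes the Gamma-integral and the identification $(m)_{|\upsilon|}\,g_\beta(\beta)$ slightly more transparent but is otherwise the same computation.
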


\begin{proof}
See Appendix A. 
\end{proof}
 In Table \ref{tab:cm2k}, the coverage probability expressions are provided for the  case of  $m= 2$. We now make a  few observations: 
\begin{itemize}
\item When $m=1$, \ie, only the self-interference from the other data streams is canceled,  $\beta =1$ almost surely and hence the expectation with respect to $\beta$ in Theorem \ref{the:cov} can be dropped.
\item  When $\delta =0$, \ie,  all the antenna are used to cancel interference, then \[\pzf(z)=\E_\beta [\Lambda_{0, N_t}(\beta^{-\alpha} z)^{-m}].\]  From the expression, it seems that the coverage probability increases exponentially with the number of canceled interferers. However, this is not the case as $\beta$ in the above expression is a function of $m$.   This can be seen in Figure \ref{fig:PzfVsm} where  we observe diminishing benefits with increasing $m$
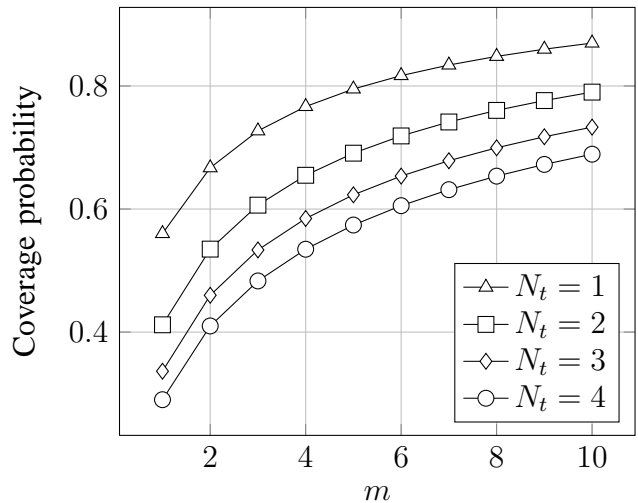
\begin{figure}
\centering 

\begin{tikzpicture}
\begin{axis}[scale=1,
grid = both,
legend pos =south east,
xlabel =$m$,
ylabel = Coverage probability 
]

 \addplot[ mark=triangle*, mark size=3,  mark options={fill=white} ]
 coordinates{
 (1.000000, 0.560099) (2.000000, 0.667024) (3.000000, 0.727027) (4.000000, 0.766596) (5.000000, 0.795110) (6.000000, 0.816845) (7.000000, 0.834070) (8.000000, 0.848117) (9.000000, 0.859827) (10.000000, 0.869761) 
  };\addlegendentry{$N_t=1$}
  \addplot[ mark=square*, mark size=3,  mark options={fill=white} ]
  coordinates{
  (1.000000, 0.411845) (2.000000, 0.534862) (3.000000, 0.606318) (4.000000, 0.654880) (5.000000, 0.690808) (6.000000, 0.718838) (7.000000, 0.741513) (8.000000, 0.760347) (9.000000, 0.776309) (10.000000, 0.790054) 
   };\addlegendentry{$N_t=2$}
   \addplot[ mark=diamond*, mark size=3,  mark options={fill=white} ]
   coordinates{
   (1.000000, 0.336403) (2.000000, 0.459810) (3.000000, 0.533492) (4.000000, 0.584612) (5.000000, 0.623088) (6.000000, 0.653554) (7.000000, 0.678524) (8.000000, 0.699507) (9.000000, 0.717478) (10.000000, 0.733102) 
    };\addlegendentry{$N_t=3$}
    \addplot[ mark= *, mark size=3,  mark options={fill=white} ]
    coordinates{
    (1.000000, 0.290088) (2.000000, 0.409949) (3.000000, 0.483135) (4.000000, 0.534701) (5.000000, 0.573999) (6.000000, 0.605446) (7.000000, 0.631460) (8.000000, 0.653503) (9.000000, 0.672523) (10.000000, 0.689172) 
     };\addlegendentry{$N_t=4$}
\end{axis}
\end{tikzpicture}
\caption{Coverage probability versus $m$ for various $N_t$ with $\alpha=4$ at $z=0$ dB.}
	\label{fig:PzfVsm}
\end{figure}

\item In Theorem \ref{the:cov}, the coverage probability is obtained by averaging over $\beta =R/r$. Hence $\pzf(z)$ corresponds to the coverage probability of typical user. Instead of averaging over $\beta$,  evaluating \eqref{eq:456} at a particular value of $\beta$ would indicate the coverage of a  user at a specified distance. For example, $\beta=1$  would  correspond to an edge user with $m=2$. 
\end{itemize}
 \begin{table} [h]
\centering
 \begin{tabular}{|c|l|}
\hline $N_t\times N_r$&Coverage probability $\pzf(z) $\\\hline 
 $1\times 2$ & $\int_{\beta=1}^\infty \frac{2}{\beta ^3 \, _2F_1\left(1,-\frac{2}{\alpha };1-\frac{2}{\alpha };-z \beta ^{-\alpha }\right){}^2}\d\beta$.\\ \hline 
$1\times 3$ & $ \int_{\beta=1}^\infty \frac{2}{\beta ^3 \, _2F_1\left(1,-\frac{2}{\alpha };1-\frac{2}{\alpha };-z \beta ^{-\alpha }\right){}^2}-\frac{4 z \beta ^{-\alpha
   -3} \Gamma \left(1-\frac{2}{\alpha }\right)^2 \, _2F_1\left(2,1-\frac{2}{\alpha };2-\frac{2}{\alpha };-z \beta ^{-\alpha
   }\right)}{\Gamma \left(2-\frac{2}{\alpha }\right) \Gamma \left(-\frac{2}{\alpha }\right) \, _2F_1\left(1,-\frac{2}{\alpha
   };1-\frac{2}{\alpha };-z \beta ^{-\alpha }\right){}^3} \d\beta$.\\ \hline 
$2\times 4$ & $\int_{\beta=1}^\infty \frac{2}{\beta ^3 \, _2F_1\left(2,-\frac{2}{\alpha };1-\frac{2}{\alpha };-z \beta ^{-\alpha }\right){}^2}\d\beta $.\\\hline
 $2\times 5$ & $\int_{\beta=1}^\infty \frac{2}{\beta ^3 \, _2F_1\left(2,-\frac{2}{\alpha };1-\frac{2}{\alpha };-z \beta ^{-\alpha }\right){}^2}-\frac{8 z \beta ^{-\alpha   -3} \Gamma \left(1-\frac{2}{\alpha }\right)^2 \, _2F_1\left(3,1-\frac{2}{\alpha };2-\frac{2}{\alpha };-z \beta ^{-\alpha   }\right)}{\Gamma \left(2-\frac{2}{\alpha }\right) \Gamma \left(-\frac{2}{\alpha }\right) \, _2F_1\left(2,-\frac{2}{\alpha   };1-\frac{2}{\alpha };-z \beta ^{-\alpha }\right){}^3} \d\beta$.\\ \hline
\end{tabular}
\caption{Coverage probability expressions for  $\sigma^2=0$, $m=2$  with different $N_t$.}
\label{tab:cm2k}
\end{table}
%

\subsection{Interference cancellation or signal enhancement?}
The antennas at the receiver can be used for either interference cancellation or enhancing the desired signal. In our formulation,  $mN_t-1$ antenna are used for interference cancellation while $\delta+1$ antenna are used for signal enhancement. For a given $N_r$, what is the optimal $(m, \delta)$ split to maximize the coverage probability? Since coverage probability is  a complicated expression of $(m,\delta)$, we will use the  average interference-to-signal ratio as the metric.  
We have
\[\E[\sinr^{-1}] = \E[N_t \sigma^2 r^\alpha S ^{-1} +S ^{-1} \I r^{\alpha} ],\]
which equals $N_t \sigma^2 \E[r^\alpha]\E[ S ^{-1}] +\E[S ^{-1}]\E[ \I r^{\alpha} ]]$.
If $\delta \neq 0$, then $\E[ S ^{-1}] = \frac{1}{2\delta}$. Since $r$ is Rayleigh distributed, 
$\E[r^\alpha] = (\pi\lambda)^{-\alpha/2}\Gamma(1+\alpha/2)$.
Also, 
$\E[ \I r^{\alpha} ]= N_t \sum_{k=m}^\infty\E[\beta^{-\alpha}_{(k)}]$,
where $\beta_{(k)}$ represent the ratio of the distance to the $k$-th nearest interfering station to the serving BS distance. Using \eqref{eq:gb}, we obtain
\[\E[ \I r^{\alpha} ]= N_t \sum_{k=m}^\infty{(k-1)\Gamma(k-1)\Gamma(1+\alpha/2)}{\Gamma(k+\alpha/2)^{-1}}.\] 
Hence 
\begin{equation}
\E[\sinr^{-1}]=\frac{N_t\Gamma(1+\alpha/2)}{2\delta}\left( \sigma^2(\pi\lambda)^{-\alpha/2}\Gamma(1+\alpha/2)+ \sum_{k=m}^\infty\frac{\Gamma(k)\Gamma(1+\alpha/2)}{\Gamma(k+\alpha/2)} \right).
\label{eq:act}
\end{equation}
It  follows from Kershaw's inequality that $\Gamma(k+\alpha/2)/\Gamma(k) \approx (k+\alpha/4-1/2)^{\alpha/2}$ (indeed an upper bound).  Substituting for $\Gamma(k+\alpha/2)/\Gamma(k)$ and replacing the summation by integration, we have the following approximation:
\begin{equation}
\E[\sinr^{-1}]\approx\frac{N_t\Gamma(1+\frac{\alpha}{2})}{2\delta}\left( \sigma^2(\pi\lambda)^{-\frac{\alpha}{2}}\Gamma(1+\frac{\alpha}{2})+ \frac{2\Gamma(1+\frac{\alpha}{2})(m+\frac{\alpha}{4}-1/2)^{1-\frac{\alpha}{2}}}{\alpha-2} \right).
\label{eq:567}
\end{equation}
Using this result we can obtain the optimal $m$ and is stated in the following proposition. 
\begin{figure}
\centering 
\begin{tikzpicture}
\begin{axis}[
grid = both,
legend pos =north east,
xlabel =$m$,
ylabel = {$\E[\sinr^{-1}]$}
]
  \addplot[ mark=triangle*, mark size=2,  mark options={fill=black} ]
coordinates{
 (2,0.122525)  (3,0.0924092)  (4,0.080033)  (5,0.0760396)  (6,0.0783828) (7,0.0886374) (8,0.115099)  (9,0.20242)
 };
  \addlegendentry{$1\times 10, \alpha=4$}
 
       \addplot[ mark=square*, mark size=2,  mark options={fill=black} ]
coordinates{
(1,0.201995)  (2, 0.144161) (3, 0.126774) (4, 0.122582) (5, 0.126842) (6, 0.140137) (7, 0.167962) (8, 0.229279) (9, 0.421227)
 };
  \addlegendentry{$1\times 10, \alpha=3$}
      \addplot[ mark=diamond*, mark size=2,  mark options={fill=black} ]
coordinates{
(1, 0.49505) (2, 0.326733) (3, 0.323432) (4, 0.480198)
 };
  \addlegendentry{$2\times 10, \alpha=4$}
  
        \addplot[ mark=*, mark size=2,  mark options={fill=black} ]
coordinates{
(1,0.454489) (2, 0.384428) (3, 0.443708) (4, 0.735492)
 };
  \addlegendentry{$2\times 10, \alpha=3$}
  
    \addplot[ mark=*, mark size=2,  mark options={fill=white} ]
coordinates{
(1,0.848656) (2,0.735149) (3,1.94059)
 };
  \addlegendentry{$3\times 10, \alpha=4$}
  
   \addplot[   mark =oplus,  mark size=5, only marks]
coordinates{
  (5,0.0760396)  (2,0.735149) (4, 0.122582) (3, 0.323432) (2, 0.384428)
 };
 \addlegendentry{Optimal}
\end{axis}
\end{tikzpicture}
\caption{Average $\E[\sinr^{-1}]$ from \eqref{eq:act} as a function of $m$ for $N_r=10$ with different $N_t$ and $\alpha$ for $\sigma^2=0$. The optimal (lowest) value in each case is marked. We observe that the optimal value is always  $N_t\left\lceil\left(1-\frac{2}{\alpha}\right) \left( \frac{N_r}{N_t} -\frac{1}{2}\right)\right\rceil$ as specified in Proposition \ref{prop:one}.}
	\label{fig:opt_m}
\end{figure}
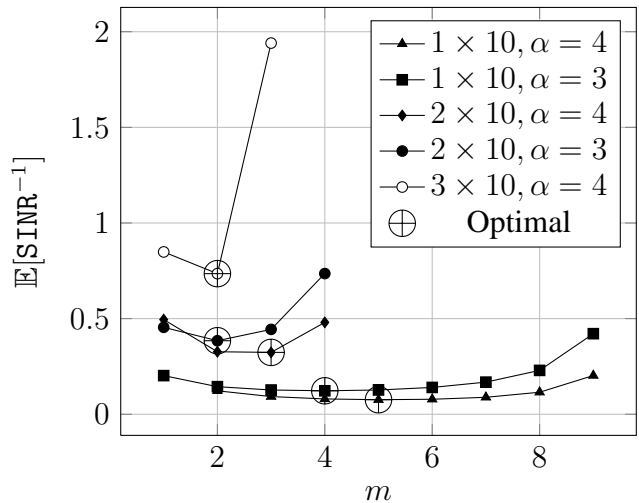
\begin{prop}
\label{prop:one}
The optimal $m^*(\sigma^2)$ is the smallest integer that is greater than the positive root of the equation 
\[2(m+\alpha/4-1/2)^{1-\alpha/2}N_t +\sigma^2(\pi\lambda)^{-\alpha/2}N_t(\alpha-2)-(N_r-mN_t)(m+\alpha/4-1/2)^{-\alpha/2}(\alpha-2)=0,\]
if such a root exists.  In particular when $\sigma^2=0$, \ie, when the system is interference limited, we have 
\begin{align}
m^*(0)= N_t\left\lceil\left(1-\frac{2}{\alpha}\right) \left( \frac{N_r}{N_t} -\frac{1}{2}\right)\right\rceil .
\label{eq:opt}
\end{align}
\end{prop}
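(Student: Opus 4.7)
My plan is purely analytic: treat the approximation in~\eqref{eq:567} as a smooth function of a real variable $m$ on the admissible interval $[1,N_r/N_t)$, differentiate to recover the stationarity condition quoted in the proposition, and then specialize to $\sigma^2=0$ and solve explicitly. Setting $c=\alpha/4-1/2$, the $m$-dependence in~\eqref{eq:567} is captured by
\[
h(m)\;=\;\frac{1}{N_r-mN_t}\Bigl[A+B\,(m+c)^{1-\alpha/2}\Bigr],
\]
where $A=\sigma^2(\pi\lambda)^{-\alpha/2}\Gamma(1+\alpha/2)$ and $B=2\Gamma(1+\alpha/2)/(\alpha-2)$; the outer positive prefactor $N_t\Gamma(1+\alpha/2)/2$ in~\eqref{eq:567} does not affect the location of the minimum.

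Applying the quotient rule produces two competing contributions to $h'(m)$: the factor $1/(N_r-mN_t)$ pushes $m$ down (so $\delta$ stays large and the desired signal is boosted), while $(m+c)^{1-\alpha/2}$ pushes $m$ up (so more interferers are cancelled). Using the identity $1-\alpha/2=-(\alpha-2)/2$, clearing denominators, and multiplying through by $2/B$, the condition $h'(m)=0$ rearranges cleanly into
\[
2N_t(m+\tfrac{\alpha}{4}-\tfrac12)^{1-\alpha/2}+\sigma^2(\pi\lambda)^{-\alpha/2}N_t(\alpha-2)-(N_r-mN_t)(m+\tfrac{\alpha}{4}-\tfrac12)^{-\alpha/2}(\alpha-2)=0,
\]
which is exactly the stationarity equation stated in the proposition. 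Declaring $m^*(\sigma^2)$ to be the smallest integer above the positive root (when one exists) then yields the integer optimum.

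For $\sigma^2=0$ the middle term vanishes, and dividing both sides by the common factor $(m+\alpha/4-1/2)^{-\alpha/2}$ collapses the condition to the linear relation $2N_t(m+\alpha/4-1/2)=(\alpha-2)(N_r-mN_t)$. Solving for $m$ and factoring out $(1-2/\alpha)$ produces
\[
m\;=\;\Bigl(1-\tfrac{2}{\alpha}\Bigr)\Bigl(\tfrac{N_r}{N_t}-\tfrac12\Bigr),
\]
and taking the ceiling yields~\eqref{eq:opt}.

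The only subtle issue I anticipate is verifying that the stationary point is genuinely a minimum rather than a maximum or saddle. I would handle this by showing that $h$ is log-convex on $(1,N_r/N_t)$: the factor $1/(N_r-mN_t)$ is log-convex since $-\log(N_r-mN_t)$ has positive second derivative $N_t^2/(N_r-mN_t)^2$, and the bracket $A+B(m+c)^{1-\alpha/2}$ is log-convex as a sum of two log-convex functions (the constant $A$ trivially, and $B(m+c)^{1-\alpha/2}$ because $(1-\alpha/2)\log(m+c)$ has second derivative $(\alpha/2-1)/(m+c)^2>0$ for $\alpha>2$). Since sums and products of log-convex functions are log-convex, and log-convex functions are convex, any interior critical point of $h$ is the unique global minimum on the admissible interval, so the ceiling of the unique positive root is indeed the desired integer optimizer.
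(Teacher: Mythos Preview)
Your approach is exactly the paper's: substitute $\delta=N_r-mN_t$ into the approximation~\eqref{eq:567}, differentiate in $m$, and set the derivative to zero to obtain the displayed stationarity equation, then solve it in closed form when $\sigma^2=0$. The log-convexity argument you add to certify that the stationary point is a minimum is a welcome piece of rigor that the paper's one-line proof omits.
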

\begin{proof}
To find the optimal $m$, we set  $\delta = N_r-mN_t$ in \eqref{eq:567}, differentiate and equate to zero. 
\end{proof}

This is in tune with the results in \cite{jindal}, where they show that it is optimal to use  $(1-2/\alpha)$ fraction of the antennas for interference cancellation.  In Figure \ref{fig:opt_m}, the average $\E[\sinr^{-1}]$ computed using \eqref{eq:act}  is plotted as a function of $m$ for various configurations. We observe that the optimal $m^*(0)$ coincides with the optimal $m$  as can be seen in the Figure \ref{fig:opt_m}.  This result indicates that it is optimal to utilize $m^*(0)$ fraction of receive antenna to cancel interfering nodes and utilize the remaining antenna to strengthen the desired signal. So as the path loss exponent $\alpha$ increases, more antenna should be used for interference cancellation rather than boosting the desired signal.  

\subsection{Numerical results for coverage and discussion}
\begin{figure}
\centering

\begin{tikzpicture}
\begin{semilogyaxis}[scale=1.8,
xmin=-2,
xmax=16,
ymin=0.05,
grid = both,
legend style={
 cells={anchor=west},
legend pos=south west ,
 },
xlabel =Threshold $z$ dB,
ylabel = Coverage probability 
]

\addplot[ mark=square*, mark size=3,  mark options={fill=white}]
coordinates{
(-5.000000, 0.931920) (-4.000000, 0.902189) (-3.000000, 0.864880) (-2.000000, 0.820385) (-1.000000, 0.769847) (0.000000, 0.714995) (1.000000, 0.657849) (2.000000, 0.600398) (3.000000, 0.544351) (4.000000, 0.490993) (5.000000, 0.441161) (6.000000, 0.395293) (7.000000, 0.353519) (8.000000, 0.315756) (9.000000, 0.281794) (10.000000, 0.251352) (11.000000, 0.224126) (12.000000, 0.199810) (13.000000, 0.178111) (14.000000, 0.158757) (15.000000, 0.141500) (16.000000, 0.126116) (17.000000, 0.112403) (18.000000, 0.100181) (19.000000, 0.089287) (20.000000, 0.079577) 
 };
 \addlegendentry{$2\times4$, $m^*(0)=1$} 
 \addplot[ mark=triangle*, mark size=3,  mark options={fill=white} ]
coordinates{
(-5.000000, 0.767218) (-4.000000, 0.715645) (-3.000000, 0.661176) (-2.000000, 0.605583) (-1.000000, 0.550560) (0.000000, 0.497532) (1.000000, 0.447542) (2.000000, 0.401230) (3.000000, 0.358886) (4.000000, 0.320534) (5.000000, 0.286016) (6.000000, 0.255077) (7.000000, 0.227416) (8.000000, 0.202720) (9.000000, 0.180690) (10.000000, 0.161047) (11.000000, 0.143536) (12.000000, 0.127928) (13.000000, 0.114017) (14.000000, 0.101618) (15.000000, 0.090567) (16.000000, 0.080718) (17.000000, 0.071940) (18.000000, 0.064117) (19.000000, 0.057144) (20.000000, 0.050930) 
 };
 \addlegendentry{$3\times4$, $m^*(0)=1$} 
 \addplot[  mark=*, mark size=3, mark options={fill=white} ]
coordinates{
(-5.000000, 0.494504) (-4.000000, 0.447898) (-3.000000, 0.403698) (-2.000000, 0.362482) (-1.000000, 0.324578) (0.000000, 0.290088) (1.000000, 0.258949) (2.000000, 0.230984) (3.000000, 0.205953) (4.000000, 0.183594) (5.000000, 0.163644) (6.000000, 0.145854) (7.000000, 0.129995) (8.000000, 0.115859) (9.000000, 0.103260) (10.000000, 0.092031) (11.000000, 0.082022) (12.000000, 0.073103) (13.000000, 0.065153) (14.000000, 0.058067) (15.000000, 0.051753) (16.000000, 0.046124) (17.000000, 0.041108) (18.000000, 0.036638) (19.000000, 0.032654) (20.000000, 0.029103) 
 };
  \addlegendentry{$4\times4$, $m^*(0)=1$}

\addplot[ mark=square*, mark size=3,  mark options={fill=black}]
coordinates{
 (-4.999997 , 0.994012)  (-4.000002 , 0.988951)  (-3.000002 , 0.980573)  (-2.000002 , 0.967526)  (-1.000001 , 0.948449)  (0.000000 , 0.922262)  (1.000016 , 0.888456)  (1.999991 , 0.847283)  (2.999995 , 0.799750)  (4.000006 , 0.747443)  (5.000003 , 0.692240)  (5.999998 , 0.636015)  (6.999998 , 0.580417)  (7.999998 , 0.526746)  (8.999999 , 0.475921)  (10.000000 , 0.428511)  (11.000016 , 0.384797)  (11.999991 , 0.344849)  (12.999995 , 0.308584)  (14.000006 , 0.275827)  (15.000003 , 0.246347)  (15.999998 , 0.219887)  (16.999998 , 0.196186)  (17.999998 , 0.174986)  (18.999999 , 0.156042)  (20.000000 , 0.139127)   };
\addlegendentry{$1\times 4, m=1$}
\addplot[ mark=triangle*, mark size=3,  mark options={fill=black} ]
coordinates{
(-5.000000, 0.989629) (-4.000000, 0.983265) (-3.000000, 0.973881) (-2.000000, 0.960635) (-1.000000, 0.942763) (0.000000, 0.919708) (1.000000, 0.891244) (2.000000, 0.857531) (3.000000, 0.819108) (4.000000, 0.776811) (5.000000, 0.731667) (6.000000, 0.684760) (7.000000, 0.637138) (8.000000, 0.589733) (9.000000, 0.543322) (10.000000, 0.498518) (11.000000, 0.455769) (12.000000, 0.415382) (13.000000, 0.377540) (14.000000, 0.342329) (15.000000, 0.309759) (16.000000, 0.279780) (17.000000, 0.252303) (18.000000, 0.227211) (19.000000, 0.204366) (20.000000, 0.183623) 
 };
 \addlegendentry{$1\times 4, m=2$}%
\addplot[mark options={fill=black} ]
coordinates{
(-5.000000, 0.974010) (-4.000000, 0.963556) (-3.000000, 0.949894) (-2.000000, 0.932540) (-1.000000, 0.911135) (0.000000, 0.885501) (1.000000, 0.855685) (2.000000, 0.821968) (3.000000, 0.784833) (4.000000, 0.744925) (5.000000, 0.702985) (6.000000, 0.659783) (7.000000, 0.616071) (8.000000, 0.572532) (9.000000, 0.529762) (10.000000, 0.488252) (11.000000, 0.448388) (12.000000, 0.410453) (13.000000, 0.374641) (14.000000, 0.341070) (15.000000, 0.309788) (16.000000, 0.280793) (17.000000, 0.254044) (18.000000, 0.229465) (19.000000, 0.206961) (20.000000, 0.186421) 
 }; \addlegendentry{$1\times 4, m=3$}
 
  \addplot[style=dashed]
coordinates{
 (-4.999997 , 0.898226)  (-4.000002 , 0.877731)  (-3.000002 , 0.854292)  (-2.000002 , 0.827880)  (-1.000001 , 0.798581)  (0.000000 , 0.766596)  (1.000016 , 0.732243)  (1.999991 , 0.695931)  (2.999995 , 0.658142)  (4.000006 , 0.619395)  (5.000003 , 0.580215)  (5.999998 , 0.541107)  (6.999998 , 0.502531)  (7.999998 , 0.464890)  (8.999999 , 0.428519)  (10.000000 , 0.393685)  (11.000016 , 0.360585)  (11.999991 , 0.329354)  (12.999995 , 0.300072)  (14.000006 , 0.272772)  (15.000003 , 0.247446)  (15.999998 , 0.224056)  (16.999998 , 0.202538)  (17.999998 , 0.182813)  (18.999999 , 0.164787)  (20.000000 , 0.148359)  };
 \addlegendentry{$1\times 4, m=4$}

\addplot[
   mark =oplus,  mark size=5, only marks, mark options={fill=blue} ]
coordinates{
(-5.000000, 0.975450) (-2.000000, 0.934800) (1.000000, 0.859160) (4.000000, 0.749850) (7.000000, 0.620790) (10.000000, 0.494070) (13.000000, 0.379270) (16.000000, 0.283880) (19.000000, 0.209390) 
 };
 \addlegendentry{$1\times4$, $m=3$,  Sim}


\end{semilogyaxis}
\end{tikzpicture}
\caption{Coverage probability versus $z$ for $N_r=4$ and different $N_t$ with optimal choice of $m$. The path loss exponent $\alpha=4$ and $\sigma^2=0$.  Coverage probability versus $z$ for  different choices of $m$ and $\delta$. The path loss exponent $\alpha=4$ and $\sigma^2=0$. The Monte Carlo results are also plotted and  marked with $\oplus$. For $N_t \times N_r = 1\times 4$ and $\alpha =4$, from \eqref{eq:opt},  $m^*(0) =2$.}
	\label{fig:caseNr4}
\end{figure}
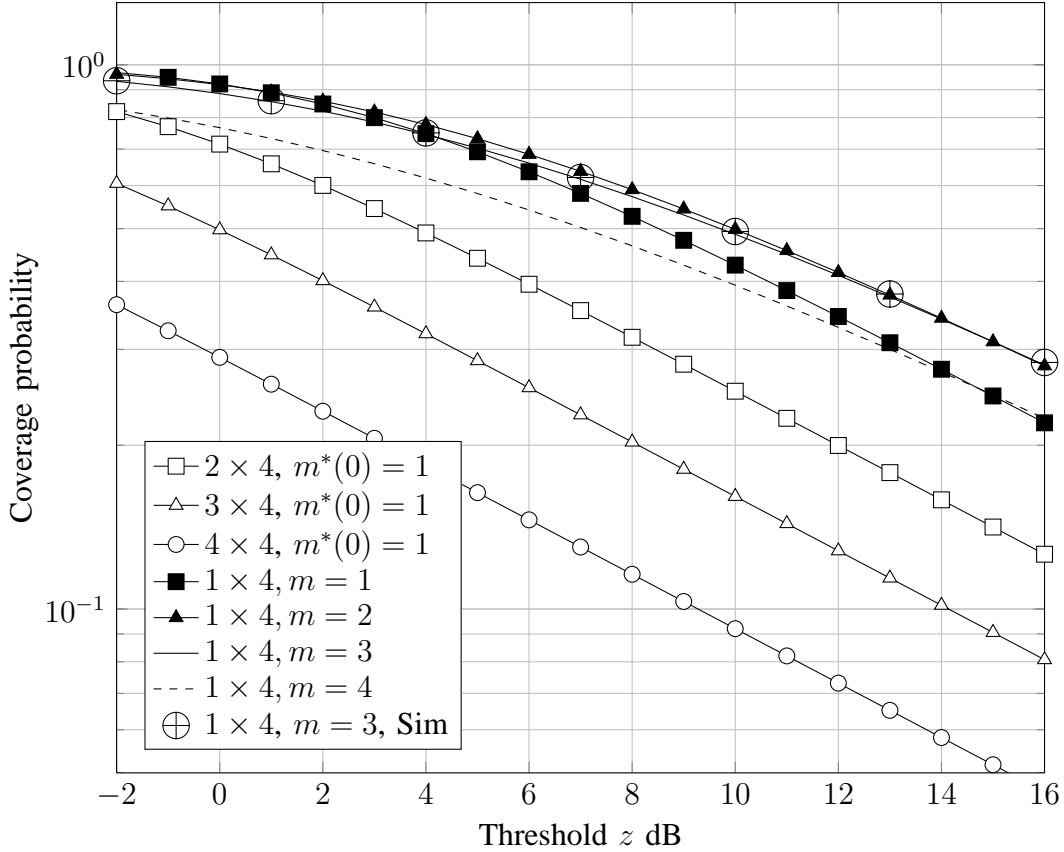
In Figure \ref{fig:caseNr4}, the coverage probability is plotted for $1\times 4$ configuration for $\alpha=4$ and different choices of $m$ and $\delta$ using Theorem \ref{the:cov}. In the same Figure, the coverage  results for different configurations obtained by Monte Carlo simulation are marked by $\oplus$. We first observe that the coverage results obtained with Monte Carlo simulations match with the analytical results. 

We observe that utilizing all the antennas for interference cancellation is not optimal. In fact, from   Figure  \ref{fig:caseNr4}, we observe that utilizing all the antenna for interference cancellation leads to the lowest coverage probability, particularly for  medium $\sir$ thresholds. 

If a very high $\sir$ is required, we observe from Figure \ref{fig:caseNr4} that  it is better to use all the antenna for cancellation. This can be observed by looking at the crossover points of the different curves. So an interior user should use his antenna to cancel interferers and obtain a higher $\sir$.  
For most users, canceling the strongest interference improves the coverage significantly over the  ZF receiver which  utilizes all of its receiver DoF to cancel interference.  So in a practical system, obtaining the channel of the nearest interferer is sufficient to have good coverage.  
It can also be seen that canceling nearest three BSs is giving almost same coverage compared to canceling nearest two, the reason is that the  interference from the third may not be strong enough. So the better strategy can be canceling nearest BSs and using the remaining DoF for array gain.  We also observe that canceling one interferer, \ie,  $m=2$ has  the highest coverage probability and corresponds to $m^*(0)$ for $N_t \times N_r =1 \times 4$.

Now coming to the performance of the PZF receivers with multi-stream
transmission, we can see that the coverage probability reduces  with 
increasing SM rate for a fixed number of antennas at the receiver. In Figure
\ref{fig:caseNr4}, it is easy to see that the coverage is
heavily reduced  with increasing the number of streams $N_t$.  This is because, 
increasing the number of streams while  keeping $N_r$ constant  will increase the
interference and the antenna available to cancel external interference is also
reduced. For the edge user this effect is dominant and we can get more insight
into this when we study the rate parameters. 

\section{Linear MMSE Receiver }
\label{sec:mmse}
In this Section, we analyze the performance of a linear MMSE receiver with inter-cell interference. We consider the case where each BS uses its antennas to serve  {\em independent}  data  streams to the users connected to it. 
Each user decodes its assigned stream using a linear MMSE receiver  treating other streams as interference.  Focusing on the user at the origin, interested in the $k$-th stream, the linear MMSE filter is given by $\v_{\o,k}^\dag= \hh{\o,k}^\dag\mathbf{R}_{\o,k}^{-1}$, where
\begin{align*}
 \mathbf{R}_{\o,k}=&\frac{1}{N_t\,r^{\alpha}}\sum_{q=1,q\ne k}^{N_t}\hh{\o,k}\,\hh{\o,k}^\dag  +\frac{1}{N_t}\sum_{x\in \Phi\setminus \o} \frac{1}{\sqrt{\|x\|^\alpha}}\mathbf{H}\mathbf{H}^\dag+\sigma^2\mathbf{I}_{N_r},
 \end{align*} is the interference plus noise covariance matrix.
The post processing $\sinr$ at the receiver is given by
\[ \sinr=\frac{1}{N_t \,r^{\alpha}}\hh{\o,k}^\dag\mathbf{R}_{\o,k}^{-1}\hh{\o,k}.\]
It is assumed that each receiving node has the knowledge of corresponding transmitting  channel $\mathbf{H}_0$ and $\mathbf{R}_{\o,k}$.


The result in \cite{gao1998theoretical, khatri} can be used to express the $\sinr$ distribution in terms of the channel gains. We then use   the probability generating functional of the  PPP to average the channel gains to obtain  the coverage with a MMSE receiver.    In \cite{Jindal_ICC},  the exact distribution of $\sir$ with SM and MMSE receiver has been  obtained in an ad hoc network when the interferers are distributed as a spatial Poisson point process. However, the results are obtained by starting with  a finite network and then obtaining the final distribution by a limiting argument.  The proof in this paper  uses  the probability generating functional, and  is easier  to extend  to other spatial distribution of nodes.  Also, as mentioned earlier, unlike an ad hoc network, where the distance to the intended transmitter is fixed, in a cellular network the distance $r$ is random  making the network scale invariant (the coverage probability without noise does not depend on the density of the BSs). 

     We first introduce some notation about integer partitions from
number theory
that we use to present the main results in this paper. We need integer
partition to represent coefficient of $m$-{th} term of a polynomial which is a
product of a number of polynomials. The integer partition of
positive integer $k$ is a  way of writing $k$ as a sum of positive
integers. The set of all integer partitions of $k$ is denoted by $\mathcal{I}_k$
and $p_j$ is the $j$-th term in the partition $p$. Here we used $|p|$ to
denote the cardinality of the set $p$. For example, the integer partitions of 4
are given by $\mathcal{I}_4=\{ \{1,1,1,1\},\{1,1,2\},\{2,2\},\{1,3\},\{4\} \}$.
The second term of the partition, $p=\{1,1,1,1\}$, is   $p_2= 1$ and  $|p|=4$.
{For each partition, we introduce non-repeating partition set,
$q(p)$,
without any repeated summands  and $q(p)_j$ represents the number of times the
$j$-th  term of $q(p)$ is repeating in $p$. For example, for the partition
$p=\{1,1,2\}$, we have $q(p)=\{1,2\}$ and $q(p)_1=2$. The next Theorem provides
the coverage probability in a general setting.}
\begin{theorem}
The probability of coverage  with a linear MMSE receiver, when the BS locations are modelled by a PPP is 
\begin{align}
\pmmse(z)=\frac{1}{(1+z)^{N_t-1}} 
&\int_{0}^\infty e^{-z\sigma^2\,r^{\alpha}}\sum_{m=0}^{N_r-1}\sum_{v=1}^{N_r-m}
\frac{(z\sigma^2\,r^{\alpha})^{v-1}}{(v-1)!}z^m\sum_{k=0}^{\min(m,N_t-1)} {N_t
- 1 \choose k}\, \nonumber\\
&\sum_{p\in \mathcal{I}_{m-k}} e^{-\pi  \lambda  r^2
\Theta_{0,N_t}(z)} \left( 2\,
\pi\,  \lambda\right)^{|p|+1} \mathcal{K}_p(z) r^{2|p|+1} \d r,
\label{eq:covprobmain}
\end{align}
where $\Theta_{\varsigma, N_t}(z)=\,_2F_1\left(N_t,\varsigma-\frac{2}{\alpha
};\varsigma-\frac{2}{\alpha   }+1;-z \right)$ and 
$\mathcal{K}_p(z)=\frac{\prod_{j=1}^{|p|}{N_t \choose p_j}\frac{\Theta_{p_j,N_t}(z)}{\alpha  p_j-2}}{\prod_{j=1}^{|q(p)|}q(p)_j!}$.
\label{the:covmmse}
\end{theorem}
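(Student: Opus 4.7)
The plan is to follow the standard two-step strategy for MMSE-in-PPP problems: condition on the channel realization, invoke the Gao--Khatri representation of the SINR distribution for quadratic forms in complex Gaussians, and then average the resulting expression over the interferer fading and the PPP using the PGFL.

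First, I would condition on $r$ and on the full interferer field $\Phi\setminus\o$ together with the interferer channel matrices. Because $\hh{\o,k}\sim\mathcal{CN}(\mathbf 0,\mathbf I_{N_r})$ is independent of $\mathbf R_{\o,k}$, the random variable $\sinr=(N_t r^\alpha)^{-1}\hh{\o,k}^\dag\mathbf R_{\o,k}^{-1}\hh{\o,k}$ is a standard complex Gaussian quadratic form given $\mathbf R_{\o,k}$. The identities of \cite{gao1998theoretical,khatri}, used in \cite{Jindal_ICC}, let me write the conditional CCDF $\mathbb{P}[\sinr>z\mid\mathbf R_{\o,k}]$ in a determinantal form that, after expansion, separates into three contributions: a factor of the form $(1+z)^{-(N_t-1)}$ (plus polynomial corrections in $z$) coming from the rank-$(N_t-1)$ intra-cell term $\sum_{q\ne k}\hh{\o,q}\hh{\o,q}^\dag$; a factor $e^{-z\sigma^2 r^\alpha}\sum_v(z\sigma^2 r^\alpha)^{v-1}/(v-1)!$ coming from the noise eigenspace; and a product over the inter-cell interferers whose structure follows from repeated application of the matrix determinant lemma to the rank-$N_t$ update $\|x\|^{-\alpha}\mathbf H_x\mathbf H_x^\dag/N_t$.

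Second, I would average the inter-cell product over the i.i.d.\ Rayleigh fading of the columns of $\mathbf H_x$. Each interferer contributes, after Gaussian integration, a factor $(1+z\|x\|^{-\alpha})^{-N_t}$; differentiating this factor $p_j$ times with respect to $z$ yields the polynomial corrections that appear inside $\mathcal{K}_p(z)$. Extracting the coefficient of $z^m$ requires the Leibniz / Faà di Bruno rule applied to a product of such kernels; this is precisely what indexes the sum over integer partitions $p\in\mathcal{I}_{m-k}$ and produces the multiplicity factor $1/\prod_{j=1}^{|q(p)|}q(p)_j!$. The binomial coefficient $\binom{N_t-1}{k}$ and the inner sum over $k\le\min(m,N_t-1)$ arise from distributing the $m$ derivatives between the intra-cell factor $(1+z)^{-(N_t-1)}$ and the inter-cell product.

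Third, I would apply the PGFL of the PPP \cite{stoyan} to the averaged inter-cell product. The exponential part becomes $\exp\!\bigl(-\lambda\int_{\|x\|>r}[1-(1+z\|x\|^{-\alpha})^{-N_t}]\,\d x\bigr)$, which, through the same change of variables used in Lemma~\ref{lem:Lap1}, evaluates to $\exp(-\pi\lambda r^2\Theta_{0,N_t}(z))$; the integrals generated by the polynomial corrections yield $\binom{N_t}{p_j}\Theta_{p_j,N_t}(z)/(\alpha p_j-2)$, matching the definition of $\mathcal{K}_p(z)$, while the powers $(2\pi\lambda r^2)^{|p|+1}$ track the order of PGFL differentiation. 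Deconditioning on $r$ via $f_r(r)$ from \eqref{eqn:fr} then yields \eqref{eq:covprobmain}. The hardest step is Step~2: translating the Gao--Khatri eigenvalue-based CCDF into a product indexed by the interferers and then matching the Leibniz-rule combinatorics to the integer-partition indexing so that $\mathcal{K}_p(z)$ emerges with exactly the claimed multiplicities. The PPP averaging in Step~3 is then a direct extension of the computation in Lemma~\ref{lem:Lap1}.
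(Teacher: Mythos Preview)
Your high-level plan---Gao--Khatri for the conditional CCDF, then PPP averaging, then decondition on $r$---matches the paper. But the middle combinatorial step is organized quite differently, and your description of it has a gap.

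The paper does not differentiate with respect to $z$ and does not use Fa\`a di Bruno here (that device is in the PZF proof, not this one). It quotes Gao's result directly in the form
\[
\P(\sinr>z\mid r,\Phi')
= e^{-z\sigma^2 r^\alpha}\sum_{m=0}^{N_r-1}\Bigl(\sum_{v}\tfrac{(z\sigma^2 r^\alpha)^{v-1}}{(v-1)!}\Bigr)\,z^m\,\frac{C_m}{D(z)},
\]
with $D(z)=(1+z)^{N_t-1}\prod_{x\in\Phi'}(1+\Gamma_x z)^{N_t}$, $\Gamma_x=r^\alpha\|x\|^{-\alpha}$, and $C_m=[z^m]D(z)$. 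So there is no separate fading-averaging step; the interferer Gaussian channels are already absorbed into Gao's formula. The coefficient $C_m$ is then read off by straight polynomial expansion: $\binom{N_t-1}{k}$ comes from $(1+z)^{N_t-1}$, and the coefficient of $z^{m-k}$ in the inter-cell product is written as a sum over \emph{distinct} interferer tuples $(x_1,\ldots,x_{|p|})$ indexed by integer partitions $p\in\mathcal I_{m-k}$, each tuple contributing $\prod_j\binom{N_t}{p_j}\Gamma_{x_j}^{p_j}$. The hypergeometric $\Theta_{p_j,N_t}(z)$ in $\mathcal K_p(z)$ then arises from the spatial integral $\int_r^\infty x\,\Gamma_x^{p_j}(1+\Gamma_x z)^{-N_t}\,\d x$, not from differentiating $(1+z\Gamma_x)^{-N_t}$ in $z$; note that the latter would shift the first hypergeometric argument to $N_t+p_j$, which is not the form of $\Theta_{p_j,N_t}$.

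The step your outline underspecifies is the expectation $\E_{\Phi'}[C_m/D(z)]$. The numerator is a sum over distinct $|p|$-tuples of points while the denominator is a product over all points, and a plain PGFL does not handle that coupling. The paper invokes the higher-order Campbell--Mecke formula
\[
\E\!\sum^{\ne}_{x_1,\ldots,x_n\in\Phi'}\!f\bigl(x_1,\ldots,x_n,\Phi'\setminus\{x_i\}\bigr)
=\lambda^n\!\int_{(\R^2)^n}\!\E\bigl[f(x_1,\ldots,x_n,\Phi')\bigr]\,\d x_1\cdots\d x_n,
\]
which is what produces the prefactor $(2\pi\lambda r^2)^{|p|}$; the remaining $2\pi\lambda r$ in $(2\pi\lambda)^{|p|+1}r^{2|p|+1}$ comes from $f_r(r)$, not from ``PGFL differentiation''. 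The PGFL is applied only afterwards, to the residual product $\prod_{x\in\Phi'}(1+\Gamma_x z)^{-N_t}$, giving $e^{-\pi\lambda r^2(\Theta_{0,N_t}(z)-1)}$. You should name Campbell--Mecke (or an equivalent factorial-moment identity) explicitly, since that---not Leibniz/Fa\`a di Bruno---is the crux of the averaging step.
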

  \begin{proof}
 See Appendix C.
\end{proof}
When $\sigma^2=0$,  the coverage probability expressions can be simplified and does not require integration.
\begin{lemma} The coverage probability of a typical user  in an interference-limited  environment, \ie, $\sigma^2=0$  with MMSE receiver is 
\begin{eqnarray}
\pmmse(z)&=&\frac{1}{(1+z)^{N_t-1}}\sum_{m=0}^{N_r-1}
z^m \sum_{k=0}^{\min(m,N_t-1)} {N_t - 1 \choose k}\sum_{p\in \mathcal{I}_{m-k}}\frac{2^{|p|} \Gamma (|p|+1) \mathcal{K}_p(z)}{\Theta_{0, N_t}(z)^{|p|+1}}
\label{eq:prob_sc},
\end{eqnarray}
\end{lemma}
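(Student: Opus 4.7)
The plan is to specialize Theorem~\ref{the:covmmse} to the interference-limited regime $\sigma^2 = 0$ and then explicitly evaluate the resulting one-dimensional integral over $r$. First, I would substitute $\sigma^2 = 0$ into~\eqref{eq:covprobmain}. The factor $e^{-z\sigma^2 r^\alpha}$ collapses to unity, and the weight $(z\sigma^2 r^\alpha)^{v-1}/(v-1)!$ vanishes for every $v \ge 2$ while equaling $1$ at $v=1$ (using the convention $0^0 = 1$). Consequently the inner sum over $v$ reduces to its $v=1$ term, and the only surviving $r$-dependence is the Gaussian-type factor $\exp(-\pi\lambda r^2 \Theta_{0,N_t}(z))$ together with the polynomial $r^{2|p|+1}$.

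Next, I would pull all factors that do not depend on $r$ outside the integral and evaluate
\[
\int_0^\infty r^{2|p|+1} e^{-\pi\lambda r^2 \Theta_{0,N_t}(z)}\, \d r
\]
via the substitution $u = \pi\lambda r^2 \Theta_{0,N_t}(z)$, which gives $r^{2|p|+1}\,\d r = \tfrac{1}{2}\bigl(\pi\lambda\Theta_{0,N_t}(z)\bigr)^{-(|p|+1)} u^{|p|}\,\d u$. Recognizing the standard Gamma integral $\int_0^\infty u^{|p|} e^{-u}\,\d u = \Gamma(|p|+1)$ then yields
\[
\int_0^\infty r^{2|p|+1} e^{-\pi\lambda r^2 \Theta_{0,N_t}(z)}\, \d r \;=\; \frac{\Gamma(|p|+1)}{2(\pi\lambda)^{|p|+1} \Theta_{0,N_t}(z)^{|p|+1}}.
\]

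Finally, multiplying by the remaining $(2\pi\lambda)^{|p|+1}$ prefactor that sits inside the sum cancels the $(\pi\lambda)^{|p|+1}$ produced by the substitution and leaves a net factor $2^{|p|} \Gamma(|p|+1) / \Theta_{0,N_t}(z)^{|p|+1}$, which is precisely the coefficient appearing in~\eqref{eq:prob_sc}. Observe also that $\lambda$ disappears entirely from the final expression, confirming the scale invariance of the SIR distribution noted in the text. The computation is almost mechanical once the sum over $v$ collapses; the only bookkeeping care is needed to match the powers of $2$ and to ensure the $(\pi\lambda)^{|p|+1}$ cancels cleanly against the substitution Jacobian. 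No new probabilistic input is required beyond Theorem~\ref{the:covmmse} itself.
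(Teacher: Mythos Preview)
Your proposal is correct and follows exactly the approach indicated in the paper, which simply states that the result follows from Theorem~\ref{the:covmmse} by setting $\sigma^2=0$ and integrating with respect to $r$. You have merely supplied the routine details of that integration (the collapse of the $v$-sum and the Gamma-integral substitution), all of which are accurate.
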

\begin{proof}
Follows from Theorem \eqref{the:covmmse}, by setting $\sigma^2=0$ and integrating with respect to $r$.
\end{proof}
Note the   the coverage expression in \eqref{eq:prob_sc} is not a function of $\lambda$. This is because of the  scale invariance property of the PPP.
In Figure \ref{fig:covfixN_r}, the coverage probability with linear MMSE
receiver is plotted  for different configurations. As expected, using higher
number of transmitting antennas, keeping $N_r$ the same, the coverage
probability reduces because of the increased interference. Also, as the number
of receiving antennas increase, keeping $N_t$ a constant, the coverage
probability increases  because of increased diversity order.
%
%

\begin{figure}[h]	
\centering
\begin{tikzpicture}
\begin{semilogyaxis}[scale=1.8,
xmin=-2,
xmax=16,
ymin=0.05,
grid = both,
legend style={
 cells={anchor=west},
legend pos=south west ,
 },
xlabel =Threshold $z$ dB,
ylabel = Coverage probability 
]
   \addplot[ mark=triangle*, mark size=3,  mark options={fill=white} ]
coordinates{( -5 , 0.94998 ) ( -4 , 0.93132 ) ( -3 , 0.90778 ) ( -2 , 0.87905 ) ( -1 , 0.84516 ) ( 0 , 0.80649 ) ( 1 , 0.76377 ) ( 2 , 0.71798 ) ( 3 , 0.67027 ) ( 4 , 0.62177 ) ( 5 , 0.57351 ) ( 6 , 0.52638 ) ( 7 , 0.48107 ) ( 8 , 0.43807 ) ( 9 , 0.39768 ) ( 10 , 0.36008 ) ( 11 , 0.32532 ) ( 12 , 0.29338 ) ( 13 , 0.26417 ) ( 14 , 0.23755 ) ( 15 , 0.21337 ) ( 16 , 0.19146 ) ( 17 , 0.17167 ) ( 18 , 0.15381 ) ( 19 , 0.13772 ) ( 20 , 0.12325 )};
 \addlegendentry{$1\times 2 $}
\addplot[ mark=square*, mark size=3,  mark options={fill=white} ]
coordinates{
 ( -5 , 0.81083 ) ( -4 , 0.76009 ) ( -3 , 0.70377 ) ( -2 , 0.64368 ) ( -1 , 0.58198 ) ( 0 , 0.52086 ) ( 1 , 0.46222 ) ( 2 , 0.40753 ) ( 3 , 0.3577 ) ( 4 , 0.31313 ) ( 5 , 0.27383 ) ( 6 , 0.2395 ) ( 7 , 0.20969 ) ( 8 , 0.18389 ) ( 9 , 0.16156 ) ( 10 , 0.14221 ) ( 11 , 0.12541 ) ( 12 , 0.11079 ) ( 13 , 0.098019 ) ( 14 , 0.086834 ) ( 15 , 0.07701 ) ( 16 , 0.068361 ) ( 17 , 0.060731 ) ( 18 , 0.053986 ) ( 19 , 0.048014 ) ( 20 , 0.042721 )   };
 \addlegendentry{$2\times 2$}
 
 \addplot[ mark=*, mark size=3,  mark options={fill=white} ]
coordinates{
 ( -5 , 0.9975 ) ( -4 , 0.99528 ) ( -3 , 0.9915 ) ( -2 , 0.98537 ) ( -1 , 0.97603 ) ( 0 , 0.96255 ) ( 1 , 0.94419 ) ( 2 , 0.92047 ) ( 3 , 0.89128 ) ( 4 , 0.85694 ) ( 5 , 0.81811 ) ( 6 , 0.77569 ) ( 7 , 0.73071 ) ( 8 , 0.68423 ) ( 9 , 0.63721 ) ( 10 , 0.5905 ) ( 11 , 0.54481 ) ( 12 , 0.50069 ) ( 13 , 0.45855 ) ( 14 , 0.41866 ) ( 15 , 0.38121 ) ( 16 , 0.34627 ) ( 17 , 0.31386 ) ( 18 , 0.28396 ) ( 19 , 0.25647 ) ( 20 , 0.2313 )  };
 \addlegendentry{$1\times 4$}
 
  \addplot[ mark=diamond*, mark size=3,  mark options={fill=white} ]
coordinates{
( -5 , 0.97249 ) ( -4 , 0.95535 ) ( -3 , 0.9312 ) ( -2 , 0.89927 ) ( -1 , 0.85954 ) ( 0 , 0.81286 ) ( 1 , 0.76075 ) ( 2 , 0.7052 ) ( 3 , 0.64822 ) ( 4 , 0.59166 ) ( 5 , 0.53699 ) ( 6 , 0.48522 ) ( 7 , 0.43699 ) ( 8 , 0.39259 ) ( 9 , 0.35208 ) ( 10 , 0.31536 ) ( 11 , 0.28221 ) ( 12 , 0.25239 ) ( 13 , 0.22562 ) ( 14 , 0.20161 ) ( 15 , 0.18011 ) ( 16 , 0.16087 ) ( 17 , 0.14366 ) ( 18 , 0.12827 ) ( 19 , 0.11451 ) ( 20 , 0.10222 )  };
 \addlegendentry{$2\times 4$}

  \addplot[ mark=*, mark size=3,  mark options={fill=black} ]
coordinates{
( -5 , 0.85008 ) ( -4 , 0.79192 ) ( -3 , 0.72487 ) ( -2 , 0.65202 ) ( -1 , 0.57708 ) ( 0 , 0.50362 ) ( 1 , 0.43459 ) ( 2 , 0.37196 ) ( 3 , 0.31674 ) ( 4 , 0.2691 ) ( 5 , 0.22865 ) ( 6 , 0.19465 ) ( 7 , 0.16624 ) ( 8 , 0.14255 ) ( 9 , 0.12277 ) ( 10 , 0.10621 ) ( 11 , 0.092273 ) ( 12 , 0.080484 ) ( 13 , 0.070451 ) ( 14 , 0.06186 ) ( 15 , 0.054462 ) ( 16 , 0.048056 ) ( 17 , 0.042484 ) ( 18 , 0.037616 ) ( 19 , 0.033349 ) ( 20 , 0.029596 )  };
 \addlegendentry{$4\times 4$}

 
%
%

\addplot[ mark=square*, mark size=3,  mark options={fill=black} ]
coordinates{
 (-4.999997 , 0.969822)  (-4.000002 , 0.957056)  (-3.000002 , 0.940199)  (-2.000002 , 0.918666)  (-1.000001 , 0.892101)  (0.000000 , 0.860471)  (1.000016 , 0.824096)  (1.999991 , 0.783631)  (2.999995 , 0.739973)  (4.000006 , 0.694151)  (5.000003 , 0.647215)  (5.999998 , 0.600137)  (6.999998 , 0.553756)  (7.999998 , 0.508750)  (8.999999 , 0.465632)  (10.000000 , 0.424759)  (11.000016 , 0.386360)  (11.999991 , 0.350554)  (12.999995 , 0.317375)  (14.000006 , 0.286793)  (15.000003 , 0.258731)  (15.999998 , 0.233078)  (16.999998 , 0.209705)  (17.999998 , 0.188468)  (18.999999 , 0.169218)  (20.000000 , 0.151806)  };
 \addlegendentry{$1\times 4$, PZF $m=2$}


\end{semilogyaxis}
\end{tikzpicture}
 		\caption{Coverage probability versus $z$  for for different antenna configurations with  $\sigma^2=0$ and $\alpha=4$ using a linear MMSE receiver. }
		\label{fig:covfixN_r}
\end{figure}
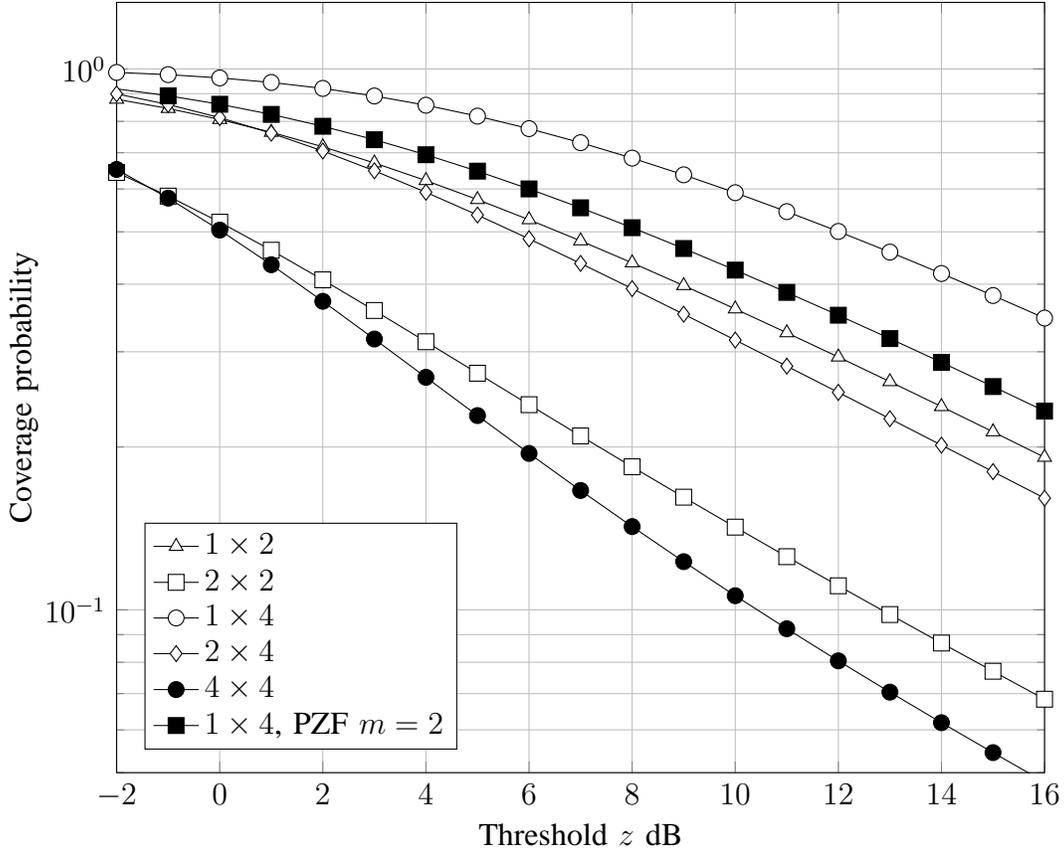
 
\section{Average ergodic rate}
\label{sec:rate}
In this Section, we compute the rate CDF for a typical user and also the ergodic data rate. We assume that $N_t$ users are being served by the BS in a cell, with one stream per user. Also for computing the rate, we treat residual interference as noise. The ergodic rate is given by $\E[\log_2(1+\sinr)]$ . Since $\log_2(1+\sinr)$ is a positive random variable, its mean is  given by the integral of its CCDF. Hence 
\begin{eqnarray}\label{eqn:rate}
\mathcal{C} (N_t,N_r)&\overset{\triangle}{=}&\E\left[\log_2(1+\sinr) \right] =\int_0^\infty \p(\sinr \geq 2^t-1)\d t.
 \end{eqnarray}
$ \p(\sinr \geq 2^t-1)$ depends on the receiver used and follows from the coverage probability by setting  $z= 2^t-1$. 

\noindent {\em Total rate with SM:} In SM each user decodes a single stream and hence achieves an ergodic rate   $\mathcal{C}(N_t,N_r)$, $N_t>1$. {\em  Hence for $N_t$ users}, the rate CDF is given by 
\begin{align}
F_{SM}(c)&= \P(N_t\log_2(1+\sinr)\leq c),  
\end{align}
where $\sinr $ denotes the  $\sinr$ with $N_t$ transmit and $m N_t+\delta$ receive antenna\footnote{In this expression we are neglecting the correlations of $\sinr$ across the users. }. The above distribution  can be easily computed from the $\sinr$ CCDF in Theorem \ref{the:cov} for a PZF receiver and the result in Theorem \ref{the:covmmse} for the MMSE receiver.  It is easy to see that the total average downlink rate  is given by $\mathcal{C}_{SM} = N_t\mathcal{C}(N_t,N_r)$.

\noindent{\em   Total rate with SST:} 
 In SST, the BS has only one antenna, \ie, $N_t =1$.  Hence it can serve only one stream and hence one user. So all the users are served  by dividing the resources either in time (TDMA) or frequency (FDMA). Hence in this case, each user has $1/N_t$ time or frequency slice. 
 In SST, since the resources have to be divided among the users, each user achieves an average rate
$N_t^{-1}\mathcal{C}(1,N_r)$. {\em  Hence for $N_t$ users}   the average total  downlink rate achieved is 
$\mathcal{C}_{SST} =\mathcal{C}(1,N_r)$.
The rate CDF is given by 
\[F_{SST}(c)= \P( \log_2(1+\sinr)\leq c).\]
 
\begin{table}
\begin{tabular}{cc}
\begin{tabular}{|c|c|c|c|c|c|} 
\hline                     
\multicolumn{6}{|c|}{PZF}\\\hline
SM/SST&$N_t\times N_r$ & $(m,\delta)$ & Mean & 5\% & 80\%\\ [0.5ex] 
\hline                  
\multirow{8}{*}{SST} &$1\times 4$&(4,0) & 3.51731 & 0.187 & 5.7162\\
&$1\times 4$&(3,1) & 4.2137 &   0.5804 &    6.4790 \\
&$1\times 4$&(2,2) & \underline{4.26918} & 0.7899   & 6.3964\\
&$1\times 4$&(1,3) & 3.83127 & 0.8333  & 5.622\\
&$1\times6$ & (6,0) & 3.9738 & 0.2585 &   6.3549\\
&$1\times6$ & (1,5) & 4.3858 & 1.2300 &    6.2300\\
&$1\times6$ & (2,4) & 5.0142 & 1.3100 &    7.2100\\
&$1\times6$ & (3,3) &\underline{ 5.2393} & 1.2900 &   7.5700\\\hline
\multirow{4}{*}{SM} &$2\times6$ & (3,0) & 5.0627 & 0.1541 &   8.4645\\
&$2\times6$ & (1,4) & 6.5011 & 1.2600 &   9.6600\\
&$2\times6$ & (2,2) &\underline{ 6.8904} & 0.9200 &  10.7\\
&$3\times6$ & (2,0) & 5.5691 & 0.1100 &  9.1100\\
&$3\times6$ & (1,3) & 7.6797 & 1.0500 &   11.7100\\      
\hline 
\end{tabular}&
\begin{tabular}{|c|c| c|c| c|} 
\hline                     
\multicolumn{5}{|c|}{MMSE}\\\hline
 SM/SST &$N_t\times N_r$ & Mean & 5\% &  80\%\\
\hline                  
\multirow{3}{*}{SST} &$1\times 2$ &   3.36 & 0.404  & 5.22 \\ 
&$1\times 4$ &  4.87  & 1.149 & 7.19 \\
& $1\times 6$ &  5.85  & 1.77 &  8.26 \\\hline
 \multirow{8}{*}{SM} &$2\times 2$ & 3.64 & 0.319 & 5.38 \\  
& $2\times 4$ &  6.35 & 1.016 &  9.46 \\
& $3\times 4$ &   6.63 & 0.958 & 9.73 \\
&$4\times 4$ &   6.58 & 0.925 &  9.09\\
&  $2\times 6$ &    8.06 & 1.73 & 11.97 \\
& $3\times 6$ &  9.12 & 1.66 & 13.64 \\
& $4\times 6$ &   9.56 & 1.62 & 13.96\\
&$5\times 6$ &    9.67 & 1.59 &  13.65\\
&$6\times 6$ &   9.34 & 1.57 &  12.87\\
 \hline 
\end{tabular}
\end{tabular}
 \caption{ Rate profile comparison for various configurations with  PZF (left) and MMSE (right) receivers. }
\label{table:rateprof_22}
\end{table}


 Various rate profiles are presented in in Table \ref{table:rateprof_22},  Figures \ref{figmmse},  \ref{figzfe}, \ref{fig:celledge} and \ref{fig:cell_edge_MMSE} for path loss exponent $\alpha =4$ and $\sigma^2=0$ obtained by numerically evaluating the analytical expressions.  
In Table \ref{table:rateprof_22}, the  rate profile is provided for various antenna configurations when a PZF receiver is used. We observe that the average rate is maximized\footnote{These maximum values are underlined in the Table.} when $m = m^*(0)=N_t\left\lceil\left(1-\frac{2}{\alpha}\right) \left( \frac{N_r}{N_t} -\frac{1}{2}\right)\right\rceil$.  We also observe that the MMSE receiver provides higher ergodic rate compared to PZF receiver for all antenna configurations. 
 
In Figure \ref{figmmse}, the average rate is plotted as a function of number of transmit streams  for various $N_r$ with a MMSE receiver. We observe that the average sum rate does not increase linearly with the number of transmit antenna. Interestingly, transmitting $N_t=N_r$ streams, does not lead to the maximum rate.  For example,  with $N_r =6$ the maximum rate is achieved by transmitting five streams and not six streams. 
We also observe that while  transmitting more streams than $N_r$, would hurt the average rate, the  rate reduction is slow with increasing streams. For example, consider the case of $N_r=4$. We see that transmitting five streams decreases the sum rate from $7$ bits/sec/Hz to $5$ bits/sec/Hz . However, the average rate is more or less fixed even if the number of streams are increased above five. From Figure \ref{figzfe}, similar to the MMSE receiver, we observe  diminishing returns with increasing $N_t$ even for the PZF receiver.
The mean rate for PZF receiver
configured with $1 \times 4$ is $4.27$ while it is $5.27$ for $2\times 4$, $5.57$ for $3\times 4$ and $4.47$ for $4 \times 4$. 
\begin{figure}	
\centering
\begin{tikzpicture}
\begin{axis}[scale=1.1,
grid = both,
legend style={
 cells={anchor=west},
legend pos=outer north east ,
 },
xlabel =\# of transmit antenna: $N_t$ (or \# of streams),
ylabel = Average sum rate 
]
   \addplot[ mark=diamond*, mark size=3,  mark options={fill=white} ]
coordinates{
( 1,2.14816)( 2,1.29352) ( 3,1.13337) ( 4,1.06637) ( 5,1.02962) ( 6,1.00641)( 7,0.990433)( 8,0.97876)( 9,0.96986)( 10,0.96285)(11,0.957186)( 12,0.952514)( 13,0.948595)( 14,0.94526)};
\addlegendentry{$N_r=1$}

\addplot[ mark=square*, mark size=3,  mark options={fill=white} ]
coordinates{
( 1,3.35546)( 2,3.64182)( 3,2.60299)( 4,2.32932)( 5,2.19738)( 6,2.11909)( 7,2.06713)( 8,2.03007)( 9,2.0023)( 10,1.98071)( 11,1.96344)( 12,1.94931)( 13,1.93753)( 14,1.92757)};
 \addlegendentry{$N_r=2$}

\addplot[ mark=triangle*, mark size=3,  mark options={fill=white} ]
coordinates{
( 1,4.86588) ( 2,6.24126)( 3,6.6364)( 4,6.5157)( 5,5.25587)( 6,4.81782)( 7,4.57299)( 8,4.41342)( 9,4.30034)( 10,4.21574)( 11,4.14995)
( 12,4.09727)( 13,4.05411)( 14,4.01809)};
 \addlegendentry{$N_r=4$}
 
 \addplot[ mark=*, mark size=3,  mark options={fill=white} ]
coordinates{
( 1,6.25335)( 2,8.78935)( 3,10.1473)( 4,10.8155)( 5,11.1406)( 6,11.156)( 7,10.7424)( 8,9.27908)( 9,8.67146)( 10,8.29331)( 11,8.02743)
( 12,7.82795)( 13,7.67187)( 14,7.54602)};
 \addlegendentry{$N_r=7$}
 
  \addplot[ mark=diamond*, mark size=3,  mark options={fill=black} ]
coordinates{
( 1,7.19207)( 2,10.5663)( 3,12.6643)( 4,14.0204)( 5,14.8742)( 6,15.354)( 7,15.6477)( 8,15.7202)( 9,15.5212)( 10,14.9309)( 11,13.3284)
( 12,12.5978)( 13,12.1155)( 14,11.761)};
 \addlegendentry{$N_r=10$}
 
   \addplot[ mark=square*, mark size=3,  mark options={fill=black} ]
coordinates{
( 1,8.10448) ( 2,12.3237)( 3,15.193)( 4,17.255)( 5,18.7606)( 6,19.8622)( 7,20.628)( 8,21.1259)( 9,21.4942)( 10,21.7144)( 11,21.7678)
( 12,21.6268)( 13,21.2407)( 14,20.488)};
 \addlegendentry{$N_r=14$}
 
    \addplot[ style=dashed]
coordinates{
( 1,2.14816)( 2,3.64182)( 3,5.08872)( 4,6.5157)( 5,7.93131)( 6,9.3395)( 7,10.7424)( 8,12.1414)( 9,13.5374)( 10,14.9309)( 11,16.3224)( 12,17.7123)( 13,19.1008)( 14,20.488)};
 \addlegendentry{$N_r=N_t$}
\end{axis}
\end{tikzpicture}
\caption{Average rate versus number of transmit streams with a MMSE  receiver.}
\label{figmmse}
\end{figure}
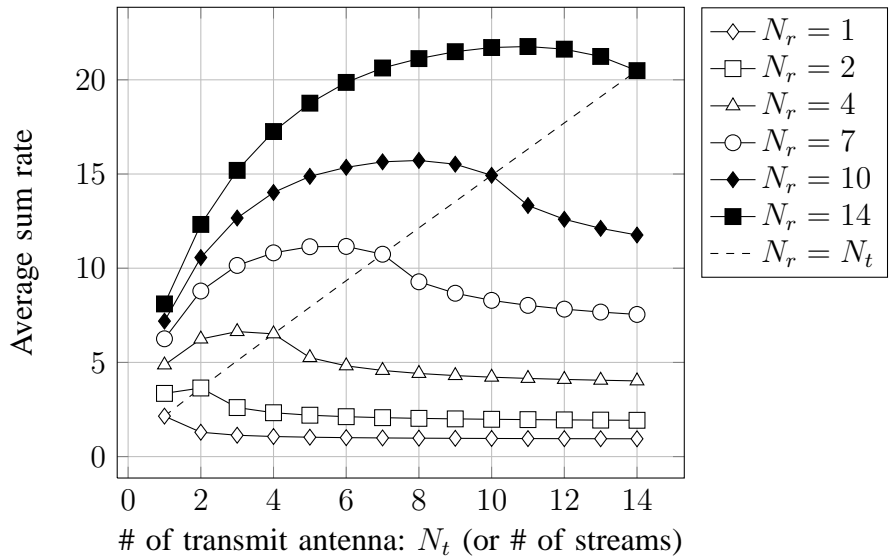
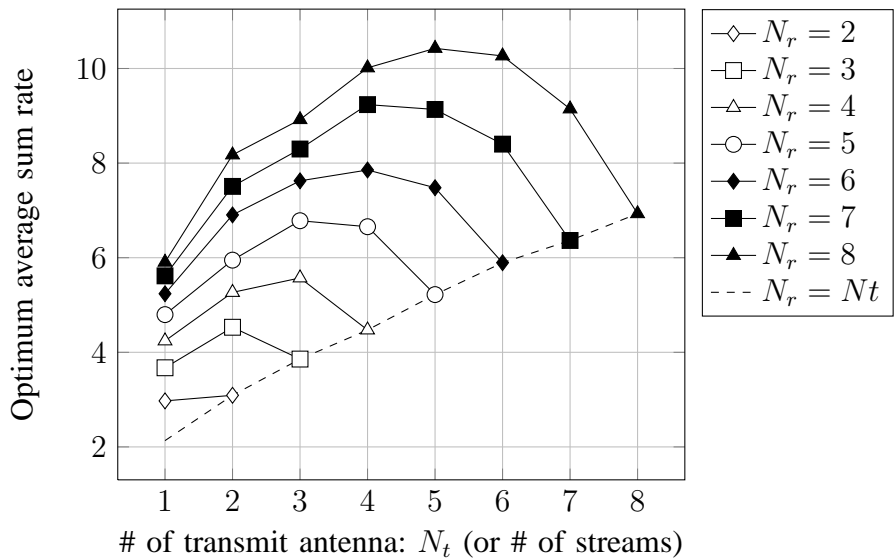
\begin{figure}	
\centering
\begin{tikzpicture}
\begin{axis}[scale=1.1,
grid = both,
legend style={
 cells={anchor=west},
legend pos=outer north east ,
 },
xlabel =\# of transmit antenna: $N_t$ (or \# of streams),
ylabel = Optimum average sum rate 
]

\addplot[ mark=diamond*, mark size=3,  mark options={fill=white} ]
coordinates{
(1.000000, 2.972727) (2.000000, 3.091040) };
\addlegendentry{$N_r = 2 $}
\addplot[ mark=square*, mark size=3,  mark options={fill=white} ]
coordinates{
(1.000000, 3.673222) (2.000000, 4.529492) (3.000000, 3.857072) };
\addlegendentry{$N_r = 3 $}
\addplot[ mark=triangle*, mark size=3,  mark options={fill=white} ]
coordinates{
(1.000000, 4.241344) (2.000000, 5.266351) (3.000000, 5.572716) (4.000000,
4.472605) };
\addlegendentry{$N_r = 4 $}
\addplot[ mark=*, mark size=3,  mark options={fill=white} ]
coordinates{
(1.000000, 4.796693) (2.000000, 5.948632) (3.000000, 6.781612) (4.000000,
6.654999) (5.000000, 5.220087) };
\addlegendentry{$N_r = 5 $}
\addplot[ mark=diamond*, mark size=3,  mark options={fill=black} ]
coordinates{
(1.000000, 5.238531) (2.000000, 6.904344) (3.000000, 7.625096) (4.000000,
7.854229) (5.000000, 7.479275) (6.000000, 5.893939) };
\addlegendentry{$N_r = 6 $}
\addplot[ mark=square*, mark size=3,  mark options={fill=black} ]
coordinates{
(1.000000, 5.616076) (2.000000, 7.507445) (3.000000, 8.297484) (4.000000,
9.236680) (5.000000, 9.134178) (6.000000, 8.401101) (7.000000, 6.364022) };
\addlegendentry{$N_r = 7 $}
\addplot[ mark=triangle*, mark size=3,  mark options={fill=black} ]
coordinates{
(1.000000, 5.905115) (2.000000, 8.173063) (3.000000, 8.919452) (4.000000,
10.013773) (5.000000, 10.425814) (6.000000, 10.265711) (7.000000, 9.145503)
(8.000000, 6.930855) };
\addlegendentry{$N_r = 8 $}
\addplot[ style=dashed ]
coordinates{
(1.000000, 2.133096) (2.000000, 3.091040) (3.000000, 3.857072) (4.000000,
4.472605) (5.000000, 5.220087) (6.000000, 5.893939) (7.000000, 6.364022)
(8.000000, 6.930855) };
\addlegendentry{$N_r = Nt $}
\end{axis}
\end{tikzpicture}
 		\caption{Average rate versus number of transmit streams with a
PZF receiver for optimal $m$. 
}
		\label{figzfe}
\end{figure}

  According to ITU definition, the $5\%$ point of the CDF of the normalized user throughput is considered as cell edge user
spectral efficiency and is plotted in Figure \ref{fig:celledge} for a PZF receiver. We  see
that increasing $N_t$ and hence increasing the number of streams in SM degrades
the performance of the edge users. For the  edge users the $\sinr$ is very weak.
Adding more streams will increase the interference which is difficult to
cancel. For example in the $N_r=4$ case, the mean rate increases  from $4.27$ to
$5.57$ when $N_t$ increases from $1$ to $3$. However,  cell edge users rate reduces 
 from $0.82$ to $0.34$ (almost halved) and for $N_t=4$
it is $0.073$. Therefore the degradation in performance for the cell edge users is
drastic compared to a little improvement in the average sum rate   for the
 PZF receiver when moving from SST to SM.  A similar observation can be made for other
 receiver configurations. 
 This implies increasing $N_t$ and using the multiple transmit antenna for
transmitting more streams will hurt the cell edge users. So from an edge user
perspective, SST is  more beneficial. 

\begin{figure}
\centering
\begin{tikzpicture}
\begin{axis}[scale=1.1,
grid = both,
legend style={
 cells={anchor=west},
legend pos=outer north east ,
 },
xlabel =\# of transmit antenna: $N_t$ (or \# of streams),
ylabel = Optimal cell edge spectral efficiency
]
\addplot[ mark=triangle*, mark size=3,  mark options={fill=white} ]
coordinates{
(1.000000, 0.329349) (2.000000, 0.074962) };
\addlegendentry{$N_r = 2 $}
\addplot[ mark=square*, mark size=3,  mark options={fill=white} ]
coordinates{
(1.000000, 0.594661) (2.000000, 0.351471) (3.000000, 0.078477) };
\addlegendentry{$N_r = 3 $}
\addplot[ mark=diamond*, mark size=3,  mark options={fill=white} ]
coordinates{
(1.000000, 0.829913) (2.000000, 0.686549) (3.000000, 0.349778) (4.000000,
0.073223) };
\addlegendentry{$N_r = 4 $}
\addplot[ mark=*, mark size=3,  mark options={fill=white} ]
coordinates{
(1.000000, 1.057610) (2.000000, 0.987761) (3.000000, 0.710276) (4.000000,
0.380057) (5.000000, 0.086100) };
\addlegendentry{$N_r = 5 $}
\addplot[ mark=triangle*, mark size=3,  mark options={fill=black} ]
coordinates{
(1.000000, 1.313221) (2.000000, 1.250251) (3.000000, 0.979055) (4.000000,
0.709232) (5.000000, 0.390032) (6.000000, 0.082081) };
\addlegendentry{$N_r = 6 $}
\addplot[ mark=diamond*, mark size=3,  mark options={fill=black} ]
coordinates{
(1.000000, 1.511698) (2.000000, 1.508651) (3.000000, 1.370245) (4.000000,
1.058666) (5.000000, 0.713417) (6.000000, 0.392296) (7.000000, 0.077835) };
\addlegendentry{$N_r = 7 $}
\addplot[ mark=square*, mark size=3,  mark options={fill=black} ]
coordinates{
(1.000000, 1.747619) (2.000000, 1.718281) (3.000000, 1.596350) (4.000000,
1.360914) (5.000000, 1.136129) (6.000000, 0.754784) (7.000000, 0.381360)
(8.000000, 0.079324) };
\addlegendentry{$N_r = 8 $}

\addplot[style=dashed]
coordinates{
(1.000000, 0.071346) (2.000000, 0.074962) (3.000000, 0.078477) (4.000000,
0.073223) (5.000000, 0.086100) (6.000000, 0.082081) (7.000000, 0.077835)
(8.000000, 0.079324) };
\addlegendentry{$N_r = Nt $}
\end{axis}
\end{tikzpicture}
 		\caption{Cell edge spectral efficiency versus number
of transmit streams with a PZF  receiver.}
		\label{fig:celledge}
\end{figure}
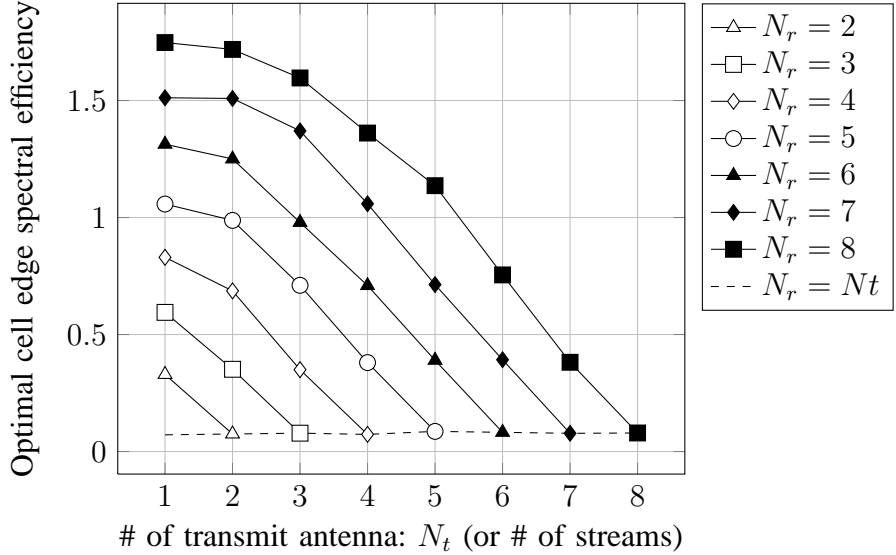

\begin{figure}
\centering
\begin{tikzpicture}
\begin{axis}[scale=1.1,
grid = both,
legend style={
 cells={anchor=west},
legend pos=outer north east ,
 },
xlabel =\# of transmit antenna: $N_t$ (or \# of streams),
ylabel = Cell edge  spectral efficiency
]
\addplot[ mark=diamond*, mark size=3,  mark options={fill=white} ]
coordinates{
(1.000000, 0.072049) (2.000000, 0.051020) (3.000000, 0.045627) (4.000000,
0.044555) (5.000000, 0.043318) (6.000000, 0.039798) (7.000000, 0.041100)
(8.000000, 0.043154) };
\addlegendentry{$N_r = 1 $}
\addplot[ mark=square*, mark size=3,  mark options={fill=white} ]
coordinates{
(1.000000, 0.383092) (2.000000, 0.310912) (3.000000, 0.271833) (4.000000,
0.276439) (5.000000, 0.265773) (6.000000, 0.252255) (7.000000, 0.254884)
(8.000000, 0.256947) };
\addlegendentry{$N_r = 2 $}
\addplot[ mark=triangle*, mark size=3,  mark options={fill=white} ]
coordinates{
(1.000000, 0.783185) (2.000000, 0.643557) (3.000000, 0.620894) (4.000000,
0.569229) (5.000000, 0.570496) (6.000000, 0.562161) (7.000000, 0.553880)
(8.000000, 0.545200) };
\addlegendentry{$N_r = 3 $}
\addplot[ mark=*, mark size=3,  mark options={fill=white} ]
coordinates{
(1.000000, 1.102469) (2.000000, 1.013859) (3.000000, 0.967499) (4.000000,
0.919787) (5.000000, 0.901516) (6.000000, 0.891403) (7.000000, 0.887057)
(8.000000, 0.878135) };
\addlegendentry{$N_r = 4 $}
\addplot[ mark=diamond*, mark size=3,  mark options={fill=black} ]
coordinates{
(1.000000, 1.462349) (2.000000, 1.397635) (3.000000, 1.350434) (4.000000,
1.277389) (5.000000, 1.237119) (6.000000, 1.201984) (7.000000, 1.209583)
(8.000000, 1.168319) };
\addlegendentry{$N_r = 5 $}
\addplot[ mark=square*, mark size=3,  mark options={fill=black} ]
coordinates{
(1.000000, 1.781815) (2.000000, 1.725574) (3.000000, 1.703482) (4.000000,
1.587151) (5.000000, 1.584179) (6.000000, 1.603093) (7.000000, 1.600523)
(8.000000, 1.532939) };
\addlegendentry{$N_r = 6 $}
\addplot[ mark=triangle*, mark size=3,  mark options={fill=black} ]
coordinates{
(1.000000, 2.090362) (2.000000, 2.088077) (3.000000, 1.976620) (4.000000,
1.993624) (5.000000, 1.893945) (6.000000, 1.981499) (7.000000, 1.911058)
(8.000000, 1.859809) };
\addlegendentry{$N_r = 7 $}
\addplot[ mark= *, mark size=3,  mark options={fill=black} ]
coordinates{
(1.000000, 2.341318) (2.000000, 2.339940) (3.000000, 2.376513) (4.000000,
2.350671) (5.000000, 2.291977) (6.000000, 2.218062) (7.000000, 2.242737)
(8.000000, 2.282743) };
\addlegendentry{$N_r = 8 $}

\end{axis}
\end{tikzpicture}
 		\caption{Cell edge spectral efficiency versus number
of transmit streams with a MMSE  receiver.}
		\label{fig:cell_edge_MMSE}
\end{figure}
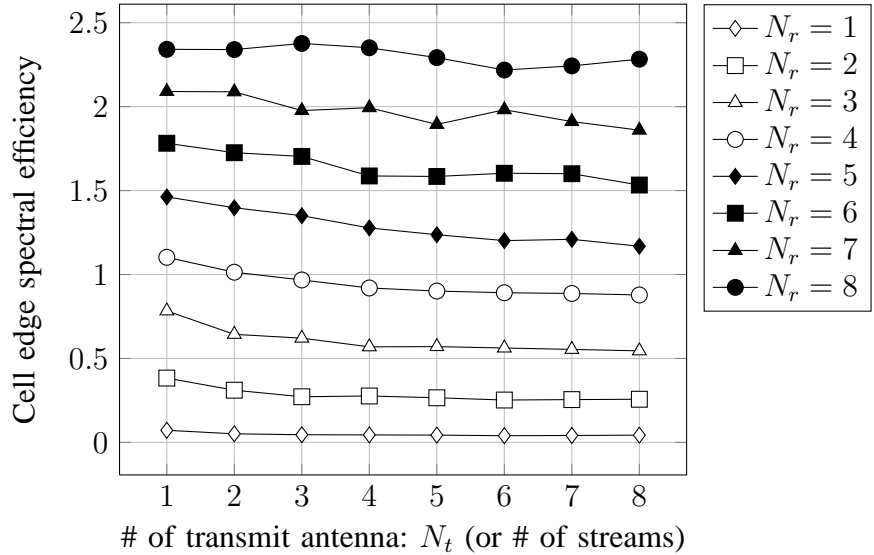
In Figure \ref{fig:cell_edge_MMSE}, the cell edge spectral efficiency is plotted for a MMSE receiver. We observe that unlike a PZF receiver, the cell edge rate does not decrease significantly with increasing streams.  This suggests that  MMSE receiver is the choice
for edge users if SM is utilized to transmit multiple streams. It will be
interesting to see the performance of MMSE receiver with limited channel
knowledge.

\section{Conclusion}
\label{conclusions}
In this paper, we characterized the performance of   open-loop  spatial
multiplexing   techniques in cellular networks with both MMSE and partial zero-forcing  receivers in the presence of distance dependent intercell interference.  Expressions for the CDF of the $\sinr$ of a typical user are obtained using tools from stochastic geometry. The distribution of $\sinr$ is used to characterize  the coverage and the rate of a typical user.  For the PZF receiver,  we show that it is optimal to cancel $N_t\left\lceil\left(1-\frac{2}{\alpha}\right) \left( \frac{N_r}{N_t} -\frac{1}{2}\right)\right\rceil$ closest interferers, where $\alpha$ is the path-loss exponent.  

We observe that increasing the SM rate provides an improvement in the mean rate with diminishing returns. The mean rate reaches a maximum value for a certain optimum SM rate that is generally less than $N_r$.  In contrast, increasing the SM rate always degrades the cell edge data rate for the PZF receiver, while the cell edge rate with a MMSE receiver is nearly independent of the SM rate. However, the MMSE receiver  requires the full channel knowledge and the practicality of  MMSE receiver should be addressed along with the pilot design methods  to enable reliable estimation of channel and interference parameters.
\section*{Acknowledgment}
 We would like to acknowledge the IU-ATC project for its
support. This project is funded by the Department of Science
and Technology (DST), India and Engineering and Physical
Sciences Research Council (EPSRC). We would also like to
thank the CPS project in IIT Hyderabad and the Samsung GRO project for supporting this
work.

\bibliographystyle{IEEEtran}
\bibliography{MMSE_PZF}

\begin{thebibliography}{10}
\providecommand{\url}[1]{#1}
\csname url@samestyle\endcsname
\providecommand{\newblock}{\relax}
\providecommand{\bibinfo}[2]{#2}
\providecommand{\BIBentrySTDinterwordspacing}{\spaceskip=0pt\relax}
\providecommand{\BIBentryALTinterwordstretchfactor}{4}
\providecommand{\BIBentryALTinterwordspacing}{\spaceskip=\fontdimen2\font plus
\BIBentryALTinterwordstretchfactor\fontdimen3\font minus
  \fontdimen4\font\relax}
\providecommand{\BIBforeignlanguage}[2]{{%
\expandafter\ifx\csname l@#1\endcsname\relax
\typeout{** WARNING: IEEEtran.bst: No hyphenation pattern has been}%
\typeout{** loaded for the language `#1'. Using the pattern for}%
\typeout{** the default language instead.}%
\else
\language=\csname l@#1\endcsname
\fi
#2}}
\providecommand{\BIBdecl}{\relax}
\BIBdecl

\bibitem{Telatar99capacityof}
E.~Telatar, ``Capacity of multi-antenna {G}aussian channels,'' \emph{Eur.
  Trans. Telecommun.}, vol.~10, no.~6, pp. 585--595, Nov. 1999.

\bibitem{foschini1998limits}
G.~J. Foschini and M.~J. Gans., ``On limits of wireless communications in a
  fading environment when using multiple antennas,'' \emph{Wireless Personal
  Commun.}, vol.~6, no.~3, pp. 311--335, Mar. 1998.

\bibitem{Tarokh98space-timecodes}
V.~Tarokh, N.~Seshadri, and A.~R. Calderbank, ``Space-time codes for high data
  rate wireless communication: Performance criterion and code construction,''
  \emph{IEEE Trans. Inf. Theory}, vol.~44, no.~2, pp. 744--765, Mar. 1998.

\bibitem{1261332}
Q.~H. Spencer, A.~L. Swindlehurst, and M.~Haardt, ``Zero-forcing methods for
  downlink spatial multiplexing in multiuser {MIMO} channels,'' \emph{IEEE
  Trans. Signal Processing}, vol.~52, no.~2, pp. 461--471, Feb. 2004.

\bibitem{1397926}
S.~A. Jafar and A.~J. Goldsmith, ``Isotropic fading vector broadcast channels:
  The scalar upper bound and loss in degrees of freedom,'' \emph{IEEE Trans.
  Inf. Theory}, vol.~51, no.~3, pp. 848 -- 857, Mar. 2005.

\bibitem{gesbert}
D.~Gesbert, M.~Kountouris, R.~W. Heath, C.-B. Chae, and T.~Salzer, ``Shifting
  the {MIMO} paradigm: From single-user to multiuser communications,''
  \emph{IEEE Signal Processing Mag.}, vol.~24, no.~5, pp. 36 --46, Sept. 2007.

\bibitem{CatreuxDriessenGreenstein}
S.~Catreux, P.~F. Driessen, and L.~J. Greenstein, ``Simulation results for an
  interference-limited multiple-input multiple-output cellular system,''
  \emph{IEEE Commun. Lett.}, vol.~4, no.~11, pp. 334--336, Nov. 2000.

\bibitem{andrews2007overcoming}
J.~G. Andrews, W.~Choi, and R.~W. Heath, ``Overcoming interference in spatial
  multiplexing {MIMO} cellular networks,'' \emph{IEEE Wireless Commun. Mag.},
  vol.~14, no.~6, pp. 95--104, Dec. 2007.

\bibitem{Tse:2005:FWC}
D.~Tse and P.~Viswanath, \emph{Fundamentals of Wireless Communication}.\hskip
  1em plus 0.5em minus 0.4em\relax New York, NY, USA: Cambridge University
  Press, 2005.

\bibitem{blum}
R.~S. Blum, ``{MIMO} capacity with interference,'' \emph{IEEE J. Sel. Areas
  Commun.}, vol.~21, no.~5, pp. 793--801, June 2003.

\bibitem{jindal}
N.~Jindal, J.~G. Andrews, and S.~Weber, ``Multi-antenna communication in ad hoc
  networks: Achieving {MIMO} gains with {SIMO} transmission,'' \emph{IEEE
  Trans. Commun.}, vol.~59, no.~2, pp. 529--540, Feb. 2011.

\bibitem{jindal-multi}
------, ``Rethinking {MIMO} for wireless networks: Linear throughput increases
  with multiple receive antennas,'' in \emph{Proc. IEEE Int. Conf. on Commun.},
  June 2009.

\bibitem{HuangAndrews_SIC}
K.~Huang, J.~G. Andrews, D.~Guo, R.~W. Heath, and R.~A. Berry, ``Spatial
  interference cancellation for multiantenna mobile ad hoc networks,''
  \emph{IEEE Trans. Inf. Theory}, vol.~58, no.~3, pp. 1660--1676, Mar. 2012.

\bibitem{vaze}
R.~Vaze and R.~W. Heath, ``Transmission capacity of ad-hoc networks with
  multiple antennas using transmit stream adaptation and interference
  cancellation,'' \emph{IEEE Trans. Inf. Theory}, vol.~58, no.~2, pp. 780
  --792, Feb. 2012.

\bibitem{Andrews_Inte_Canc}
J.~G. Andrews, ``Interference cancellation for cellular systems: A contemporary
  overview,'' \emph{IEEE Wireless Commun. Mag.}, vol.~12, no.~2, pp. 19--29,
  Apr. 2005.

\bibitem{DebbahHeath_intmit}
S.~Akoum, M.~Kountouris, M.~Debbah, and R.~W. Heath, ``Spatial interference
  mitigation for multiple input multiple output ad hoc networks: {MISO}
  gains,'' in \emph{Proc. IEEE Asilomar Conf. Signals, Syst., Comput.}, Nov.
  2011.

\bibitem{louie2010spatial}
R.~H.~Y. Louie, M.~R. McKay, N.~Jindal, and I.~B. Collings, ``{Spatial
  multiplexing with {MMSE} receivers: Single-stream optimality in ad hoc
  networks},'' \emph{arXiv:1003.3056v1}, 2010.

\bibitem{Jindal_ICC}
------, ``Spatial multiplexing with {MMSE} receivers in ad hoc networks,'' in
  \emph{Proc. IEEE Int. Conf. on Commun.}, June 2011.

\bibitem{gao1998theoretical}
H.~Gao, P.~J. Smith, and M.~V. Clark, ``Theoretical reliability of {MMSE}
  linear diversity combining in rayleigh-fading additive interference
  channels,'' \emph{IEEE Trans. Commun.}, vol.~46, no.~5, pp. 666--672, May
  1998.

\bibitem{khatri}
\BIBentryALTinterwordspacing
C.~G. Khatri, ``On certain distribution problems based on positive definite
  quadratic functions in normal vectors,'' in \emph{The Annals of Mathematical
  Statistics}, vol.~37, no.~2.\hskip 1em plus 0.5em minus 0.4em\relax The
  Institute of Mathematical Statistics, Apr. 1966, pp. 468--479. [Online].
  Available: \url{http://dx.doi.org/10.1214/aoms/1177699530}
\BIBentrySTDinterwordspacing

\bibitem{mckay}
R.~H.~Y. Louie, M.~R. McKay, and I.~B. Collings, ``Open-loop spatial
  multiplexing and diversity communications in ad hoc networks,'' \emph{IEEE
  Trans. Inf. Theory}, vol.~57, no.~1, pp. 317 --344, Jan. 2011.

\bibitem{6655532}
S.~Veetil, K.~Kuchi, A.~Krishnaswamy, and R.~Ganti, ``Coverage and rate in
  cellular networks with multi-user spatial multiplexing,'' in
  \emph{Communications (ICC), 2013 IEEE International Conference on}, June
  2013, pp. 5855--5859.

\bibitem{stoyan}
D.~Stoyan, W.~S. Kendall, and J.~Mecke, \emph{Stochastic Geometry and its
  Applications}, 2nd~ed.\hskip 1em plus 0.5em minus 0.4em\relax New York, NY,
  USA: John Wiley and Sons, 1996.

\bibitem{ganti_coverage}
J.~G. Andrews, F.~Baccelli, and R.~K. Ganti, ``A tractable approach to coverage
  and rate in cellular networks,'' \emph{IEEE Trans. Commun.}, vol.~59, pp.
  3122 -- 3134, Nov. 2011.

\end{thebibliography}

\appendices
\section{Proof of Theorem \ref{the:cov}}\label{sec:appA}
 In Theorem \ref{the:cov_gen}, setting $\sigma^2=0$, we have
\[\pzf(z)= \int_{0}^\infty \int_{r}^\infty \sum_{k=0}^{\delta}  \frac{(-s)^k}{k!} \frac{\d^k}{\d s^k} \L_{\I}(s)\Big|_{s=z r^\alpha}f_{R|r}(R|r) f_r(r) \d R   \d r.\] We first evaluate the  derivatives inside the above integral. 
The $k$-{th} derivative of $ \L_{\I}(s)$  can be evaluated using Fa\`{a} di
Bruno's formula for the 	 derivative of a composite function $g(f(x))$, and properties of the  derivatives of the 
hypergeometric function.  In this paper we  use a set partition version of
Fa\`{a} di Bruno's formula which is stated below.
\begin{align*} 
\frac{\d^k}{\d s^k} g\left(f(s)\right)=\sum_{\upsilon\in\mathcal{S}_k} g^{(|\upsilon|)}(f(s)) \prod_{j=1}^k (f^{(j)}(s))^{|\upsilon|_j},
\end{align*} 
where the notation for the  set partition is introduced in Section \ref{setpart}.
Let   $g(s)=e^{-\lambda\pi\,R^2 s}$ and $f(s)=\,  _2F_1\left(N_t,-\frac{2}{\alpha
   };\frac{\alpha -2}{\alpha };-R^{-\alpha } s\right)$.  Hence  $\L_{\I}(s)=e^{\lambda \pi   R^2}g(f(s))$.
The $p$-{th} derivative of $g(s)$ is  $ g^{(p)}(s)=\left(-\lambda \pi R^2\right)^p e^{-\lambda \pi R^2 s}
$. The following property of  hypergeometric functions can be easily verified $
\frac{\d^j}{\d s^j} \, _2F_1(a,b;c;s)=\frac{(a)_j (b)_j}{(c)_j}\, _2F_1(a+j,b+j;c+j;s)$, where $(a)_j $ is the Pochhammer symbol.  Hence, 
\begin{align} 
f^{(j)}(s)=\frac{(N_t)_j (-\frac{2}{\alpha
   })_j}{(\frac{\alpha -2}{\alpha })_j} (-R^{-\alpha })^j\, _2F_1(N_t+j,-\frac{2}{\alpha
   }+j;\frac{\alpha -2}{\alpha }+j;-R^{-\alpha }s) .
\end{align}
Using the property, $\sum_{j=1}^k j|\upsilon|_j=k$ and Fa\`{a} di Bruno's formula, we obtain 
\begin{align} 
\frac{\d^k}{\d s^k}\L_{\I_R}(s)\Big|_{s={zr^\alpha}}
   =&e^{\lambda \pi   R^2}\sum_{\upsilon\in\mathcal{S}_k} (-1)^{k+|\upsilon|}
\left(\lambda \pi   \right)^{|\upsilon|}\, R^{2|\upsilon|-\alpha k} 
e^{-\lambda\pi\,R^2\,  _2F_1\left(N_t,-\frac{2}{\alpha
   };\frac{\alpha -2}{\alpha };-R^{-\alpha } r^\alpha z \right) }\nonumber\\
  & \times\prod_{j=1}^k \left(\frac{(N_t)_j (-\frac{2}{\alpha
   })_j}{(\frac{\alpha -2}{\alpha })_j} \, _2F_1(N_t+j,-\frac{2}{\alpha
   }+j;\frac{\alpha -2}{\alpha }+j;-R^{-\alpha }r^\alpha z) \right)^{|\upsilon|_j}.
   \label{eq:ap1}
\end{align}

We obtain $\pzf(z)$ by substituting \eqref{eq:ap1} in Theorem \ref{the:cov_gen} with the functions
$f_{R|r}(R|r)$ and  $f_r(r)$ given by  \eqref{eqn:fRr} and
\eqref{eqn:fr}. Then by using the transformation $R/r \to \beta$ and $r\to t$ 
(which implies $\beta>1$),  and  the corresponding Jacobian we have (after basic algebraic manipulation), 
\begin{align*} 
\pzf(z)= \frac{4 (\pi \lambda)^m}{(m-2)!} \sum_{k=0}^{\delta} \frac{(-z
)^k}{k!}\int_{t=0}^\infty \int_{\beta=1}^\infty  \sum_{\upsilon\in\mathcal{S}_k}
(-1)^{k+|\upsilon|}
\left(\lambda \pi   \right)^{|\upsilon|}\, (\beta t)^{2|\upsilon|-\alpha k} 
e^{-\lambda\pi\,(\beta t)^2\,  _2F_1\left(N_t,-\frac{2}{\alpha
   };\frac{\alpha -2}{\alpha };-\beta^{-\alpha} z \right) }\nonumber\\
 \times\left[\prod_{j=1}^k \left(\frac{(N_t)_j (-\frac{2}{\alpha
   })_j}{(\frac{\alpha -2}{\alpha })_j} \, _2F_1(N_t+j,-\frac{2}{\alpha
   }+j;\frac{\alpha -2}{\alpha }+j;-\beta^{-\alpha } z)
\right)^{|\upsilon|_j}\right] t^{\alpha\,k+2m-1}(\beta^2  -1)^{m-2} \beta \d
\beta\d t.
\end{align*}
We can see that the product term is free of $t$ and we can group the other
terms and by integrating with respect to $t$ we obtain  the result.

\section{Proof of Theorem \ref{the:covmmse}}
Denote the  BS serving the typical  user at the origin by $x_0$ which is at a distance $r$,  we have from ~\cite{gao1998theoretical}, 
\begin{eqnarray}
\mathbb{P}(\sinr>z|r)&=&e^{-z/\gamma}\sum_{m=0}^{N_r-1}\left(\sum_{v=1}^{N_r-m} \frac{(z/\gamma)^{v-1}}{(v-1)!}\right) z^m\,\mathbb{E}_{\Phi^\prime}\left[\frac{C_m }{D(z)}\right]
\label{eq:condprob2},
\end{eqnarray} where  $N_r$ is the number of receiver antennas and  $C_m$ is the
coefficient of $z^m$ in $D(z)$, which is given by $
D(z)=(1+z)^{N_t-1}\prod_{x\in \Phi^\prime}(1+\Gamma_x z)^{N_t}$, where  $\Phi^\prime=\Phi^\setminus \{x_0\}$. The first term in $D(z)$ 
corresponds to same cell interference due to the $N_t-1$ streams intended for
the other users of the same cell and the second term  corresponds to the interference contribution  from other cells. Here, $\Gamma_x$'s are the interferer powers relative to the
desired source, \ie,  $\Gamma_x=\|x\|^{-\alpha}r^\alpha$.
By using the binomial expansion $D(z)$ can be expanded as
\begin{align*}
D(z)&=\sum_{k=0}^{N_t-1}{N_t - 1 \choose k}z^k \prod_{x\in \Phi^\prime}
\sum_{\nu=0}^{N_t}{N_t  \choose \nu} \Gamma_x^\nu z^\nu.
\end{align*}
We can observe that the coefficient of $z^m$ can be written
 as a product of coefficient of $z^k$ from the first polynomial and the
coefficient of $z^l$ from the second term  such
that $m=k+l$.  Hence the coefficient of $z^m$ is 
\begin{eqnarray}
C_m=\sum_{k+l=m} {N_t - 1 \choose k} \sum_{p\in
\mathcal{I}_l}\frac{\sum_{x_1,x_2,\hdots,x_{|p|}\in
\Phi^\prime}^{\neq }\prod_{j=1}^{|p|}{N_t \choose
p_j}  \Gamma_{x_i}^{p_j}}{\prod_{j=1}^{|q(p)|}q(p)_j!} ,
\label{eq:Cp}\end{eqnarray}
where $\mathcal{I}_l$ is the set of all integer partitions of $l$. See Section \ref{sec:mmse} for details   about integer partitions.  Here $\sum^{\neq}$ implies sums over disjoint tuples. 
The term $\prod_{j=1}^{|q(p)|}q(p)_j!$ in the denominator of \eqref{eq:Cp} is to
eliminate the repeating combinations of product terms formed by the
permutations of $x_1,x_2,\hdots,x_{|p|}\in\Phi^\prime$. For example,
$p=\{1,1,2\}$ is an integer partition of $l=4$,  and this partition will contribute product
terms $\{x_1x_2 x_3^2,\, x_2x_1 x_3^2,$ $\, x_1x_2^2 x_3,\, x_3x_2^2 x_1,\,
x_1^2x_2 x_3,\, x_1^2x_3 x_2\}$. Therefore the total
number of nonrepeating product terms is $3!/2!=3$. Hence
\begin{align}
\mathbb{E}_{\Phi^\prime}\left[\frac{C_m }{D(z)}\right] =(1+z)^{1-N_t}\sum_{k+l=m} \sum_{p\in
\mathcal{I}_l}\frac{ {N_t - 1 \choose k}}{\prod_{j=1}^{|q(p)|}q(p)_j! } \underbrace{\E\left[\frac{  \sum_{x_1,x_2,\hdots,x_{|p|}\in
\Phi^\prime}^{\neq }\prod_{j=1}^{|p|}{N_t \choose
p_j}  \Gamma_{x_i}^{p_j}}{\prod_{x\in \Phi^\prime}(1+\Gamma_x z)^{N_t}}\right]}_{T_1}. 
\label{eq:456}
\end{align}
We now focus on the term $T_1$, which can be rewritten as 
\[T_1= \E\left[  \sum_{x_1,x_2,\hdots,x_{|p|}\in
\Phi^\prime}^{\neq }\left(\prod_{j=1}^{|p|}{N_t \choose
p_j}  \frac{\Gamma_{x_i}^{p_j}}{(1+\Gamma_{x_i} z)^{N_t}}\right)  \prod_{x\in \Phi^\prime\setminus \{x_1,x_2,\hdots,x_{|p|}\}}(1+\Gamma_x z)^{-N_t}\right].\]
We now use Campbell-Mecke theorem for a PPP which we state for convenience.  Let $f(x, \phi): (\R^2)^n\times N \to [0,\infty)$ be a real valued function. Here $N$ denotes the set of all finite and simple sequences \cite{stoyan} in $\R^2$. Let $\Phi$ be a PPP of density $\lambda$.  We have
\[\E\sum_{x_1, x_2, \hdots x_n\in \Phi}^{\neq}f(x_1,x_2,\hdots x_n, \Phi \setminus \{x_1,x_2,\hdots, x_n\})= \lambda^n \int_{(\R^2)^n}\E[f(x_1,x_2,\hdots x_n, \Phi)]\d x_1\d x_2\hdots \d x_n. \]
In our case, we have  $T_1 = \E\sum_{x_1,x_2,\hdots,x_{|p|}\in
\Phi^\prime}^{\neq }f(x_1, x_2, \hdots, x_{|p|},\Phi^\prime\setminus \{x_1, x_2, \hdots, x_{|p|}\} )$, where 
\[f(x_1, x_2, \hdots, x_{|p|},\phi) = \underbrace{\left(\prod_{j=1}^{|p|}{N_t \choose
p_j}  \frac{\Gamma_{x_i}^{p_j}}{(1+\Gamma_{x_i} z)^{N_t}}\right) }_{T_2} \prod_{x\in \phi}(1+\Gamma_x z)^{-N_t}.\]
We use the probability generating functional of a PPP \cite{stoyan} to evaluate $\E_{\Phi^\prime}[f(x_1, x_2, \hdots, x_{|p|},\Phi^\prime)]$
\begin{align*}
\E_{\Phi^\prime}[f(x_1, x_2, \hdots, x_{|p|},\Phi^\prime)]&= T_2 \E \prod_{x\in \phi}(1+\Gamma_x z)^{-N_t}\\
&\stackrel{(a)}{=}T_2 \exp\left(-\lambda 2\pi \int_r^\infty x(1- (1+x^{-\alpha} r^\alpha z)^{-N_t})\d x\right)\\
&=T_2 \exp\left(-\pi  \lambda  r^2 \, \left(_2F_1\left(N_t,-\frac{2}{\alpha
};\frac{\alpha-2}{\alpha };-z\right)-1\right)\right).
\end{align*}
where $(a)$ follows from the PGFL of a PPP, polar coordinate transformation and the fact that the interferers are at a distance at least $r$ away. Now substituting in the Campbell-Mecke  theorem we obtain 
\begin{align*}
T_1= &\exp\left(-\pi  \lambda  r^2 \, \left(_2F_1\left(N_t,-\frac{2}{\alpha
};\frac{\alpha-2}{\alpha };-z\right)-1\right)\right)  \prod_{j=1}^{|p|}{N_t \choose p_j}2\pi \lambda\int_r^\infty   \frac{  x(x^{-\alpha}r^\alpha)^{p_j}}{(1+x^{-\alpha}r^\alpha z)^{N_t}}\d x, \nonumber \\
= & \exp\left({-\pi  \lambda  r^2 \left(\Theta_{0,N_t}(z)-1\right)}\right)  \prod_{j=1}^{|p|}{N_t \choose p_j}\frac{2 \pi  \lambda 
r^2_2F_1\left(N_t,p_j-\frac{2}{\alpha
   };p_j-\frac{2}{\alpha }+1;-z\right)}{\alpha  p_j-2},   \nonumber \\
   =&\exp\left({-\pi  \lambda  r^2
\left(\Theta_{0,N_t}(z)-1\right)}\right)  \left( 2\,
\pi\,  \lambda\,  r^2\right)^{|p|}\prod_{j=1}^{|p|}{N_t \choose p_j}\frac{\Theta_{p_j,N_t}(z)}{\alpha  p_j-2}.
 \nonumber 
\end{align*}
Substituting for $T_1$ in \eqref{eq:456} we have $\mathbb{E}_{\Phi^\prime}\left[\frac{C_m }{D(z)}\right]$ equals
\begin{align*}
(1+z)^{1-N_t}\exp\left({-\pi  \lambda  r^2
\left(\Theta_{0,N_t}(z)-1\right)}\right) \sum_{k+l=m} {N_t - 1 \choose k} \sum_{p\in
\mathcal{I}_l}\frac{ \left( 2\,
\pi\,  \lambda\,  r^2\right)^{|p|}\prod_{j=1}^{|p|}{N_t \choose p_j}\frac{\Theta_{p_j,N_t}(z)}{\alpha  p_j-2}}{\prod_{j=1}^{|q(p)|}q(p)_j! }.
\end{align*}
Substituting in  \eqref{eq:condprob2}, we obtain the conditional coverage probability. Averaging with respect to the density of $r$ given in \eqref{eqn:fr}, we obtain the result. 

\end{document}